\documentclass[11pt,a4paper]{article}

\usepackage[utf8]{inputenc}
\usepackage[T1]{fontenc}
\usepackage{lmodern}
\usepackage{amsmath,amssymb,amsthm,mathtools}
\usepackage{xcolor}
\usepackage{enumitem}
\usepackage[round,authoryear]{natbib}
\usepackage{geometry}
\usepackage{hyperref}
\usepackage{microtype}

\hypersetup{
  colorlinks=true,
  linkcolor=blue!55!black,
  citecolor=blue!55!black,
  urlcolor=blue!55!black,
  pdftitle={A 1.283 price-of-anarchy bound for the repeated virtual first-price auction},
  pdfauthor={Endre Cs{\'o}ka},
  pdfsubject={Repeated first-price allocation with artificial credits, robust utility guarantees, and a 1.283 price-of-anarchy bound},
  pdfkeywords={repeated allocation, artificial currency, first-price auction, price of anarchy, maximin utility, collusion-robustness}
}

\newcommand{\Bidder}{\textnormal{\textsc{Bidder}}}
\newcommand{\Adversary}{\textnormal{\textsc{Adversary}}}
\newcommand{\BidderTag}{\mathrm{B}}
\newcommand{\AdversaryTag}{\mathrm{A}}
\newcommand{\E}{\mathbb{E}}
\newcommand{\one}{\mathbf{1}}
\newcommand{\lowC}{\underline C}
\newcommand{\upC}{\overline C}

\theoremstyle{plain}
\newtheorem{theorem}{Theorem}
\newtheorem{lemma}{Lemma}
\newtheorem{proposition}{Proposition}
\newtheorem{corollary}{Corollary}

\theoremstyle{definition}
\newtheorem{definition}{Definition}
\newtheorem{assumption}{Assumption}

\theoremstyle{remark}
\newtheorem{remark}{Remark}

\title{A 1.283 Price-of-Anarchy Bound for the Repeated Virtual First-Price Auction}
\author{Endre Csóka\thanks{Supported by the NRDI grant KKP 138270.} \\ Alfréd Rényi Institute of Mathematics}

\begin{document}
\maketitle

\begin{abstract}
We study the repeated allocation of a single indivisible resource among $n$ strategic players.
Each player $i$ has a privately known value distribution $D_i$, and values are drawn independently across players and periods.
The goal is to find fair and efficient mechanisms.

We apply the repeated first-price auction with equal initial endowments of virtual money.
We show that each player can asymptotically secure the same fair-floor guarantee $f(D_i)$ as in \cite{Csoka2026}; consequently, the mechanism is $1.283$-optimal.
This provides a simpler and more robust alternative mechanism for this special case and may also help derive sharper upper bounds on the price of anarchy.
\end{abstract}



\section{Introduction}\label{sec:introduction}

\subsection{The mechanism and the headline result}

Many organizations repeatedly allocate a scarce object or service slot among users whose private urgency changes over time. Examples include computing capacity, laboratory equipment, internal service capacity, and other resources for which a cash market may be administratively undesirable or normatively inappropriate \citep{GorokhBanerjeeIyer2016}. A fixed rotation protects access but ignores how much users value different dates. Artificial currency offers a nonmonetary alternative: each participant receives credits, spends more when the current opportunity matters more, and bears the opportunity cost through reduced future purchasing power.

The mechanism studied here is the simplest version of that idea. Each player starts with a virtual budget. In every round the players submit feasible scalar bids, the highest bidder receives the object, and the winner's own bid is removed from her budget. Credits have no terminal value and never become real money; they only record how much future priority the winner is willing to sacrifice for the current allocation.

The headline result is an asymptotic price-of-anarchy bound of $1.283$. Because utility units are player-specific, the relevant comparison is multiplicative and Pareto rather than a sum of utilities. Given a normalized equilibrium utility vector $x$, say that its Pareto price of anarchy is at most $\lambda$ if there is no feasible expected-utility vector $u$ satisfying
\[
u_i>\lambda x_i
\qquad\text{for every player }i.
\]
We prove that every accumulation point of normalized expected-utility vectors of Bayesian Nash equilibria of the $T$-round auction has Pareto price of anarchy at most $1.283$. The same conclusion holds for $o(T)$-approximate equilibria.

This equilibrium statement follows from a stronger guarantee that does not assume equilibrium behavior. Let
\[
c_i^T
=
\sup_{\sigma_i}\inf_{\bar\sigma_{-i}}
\E\bigl[U_i^T(\sigma_i,\bar\sigma_{-i})\bigr]
\]
be player $i$'s security level, where the other players may use an arbitrary joint strategy, share information, and correlate their actions. We prove
\[
c_i^T\ge T\phi(\alpha_i,D_i)-o_{D_i}(T).
\]
At any $\delta$-Bayesian Nash equilibrium, player $i$ can deviate to a guaranteeing strategy, so her equilibrium utility is at least $c_i^T-\delta$. Thus every equilibrium inherits the playerwise guarantees.

The constant $1.283$ comes from a separate feasible-region theorem: no quota-free allocation rule can improve every coordinate of the fair-floor vector $(\phi(\alpha_i,D_i))_i$ by a factor strictly larger than $1.283$ \citep[Theorem~6.2]{Csoka2026}. The same work proves the complementary cross-instance statement that $T\phi$ is Pareto-optimal among universally feasible target functions \citep[Proposition~4.8]{Csoka2026}. The price-of-anarchy bound is therefore only the headline consequence. More fundamentally, the auction implements a Pareto-maximal system of robust playerwise guarantees through a fixed and familiar game form.

The information assumptions are essentially the same as in the prior-free construction: a player uses her own distribution $D_i$ to compute a guaranteeing strategy. The practical difference is in the mechanism. Here the distribution is not reported, the designer applies the same scalar first-price rule under every profile, and $D_i$ enters only the player's private bidding policy.

\subsection{The fair floor}

For a distribution $D$ with generalized quantile $Q_D$, the per-round fair floor is
\begin{equation}\label{eq:intro-fair-floor}
\phi(\alpha,D)=\int_0^1 Q_D(u)u^{1/\alpha-1}\,du.
\end{equation}
The exponent arises from a static weighted-rank allocation. Couple each player's value to an independent uniform rank $R_i$ and allocate the object to the largest score $R_i^{1/\alpha_i}$. Conditional on player $i$ having rank $r$, all other scores are smaller with probability
\[
\prod_{j\ne i}r^{\alpha_j/\alpha_i}=r^{1/\alpha_i-1}.
\]
Integrating $Q_{D_i}(r)$ against this winning probability gives $\phi(\alpha_i,D_i)$; integrating the probability alone gives $\alpha_i$. Thus the rule respects the budget shares as expected allocation frequencies while concentrating each player's scarce winning opportunities on her high-rank realizations.

Two formulas make the benchmark transparent. If $\alpha=1/n$, then
\[
\phi(1/n,D)=\frac1n\E\left[\max_{1\le k\le n}V_k\right]
\]
for independent $V_k\sim D$: the floor is the equal share of efficient welfare among $n$ symmetric replicas. If $U_\alpha\sim\operatorname{Beta}(1/\alpha,1)$, then
\[
\phi(\alpha,D)=\alpha\,\E[Q_D(U_\alpha)].
\]
A small share therefore gives a small entitlement concentrated on unusually high ranks rather than the unconditional proportional value $\alpha\E[V]$. The benchmark is \emph{fair} because it comes from a jointly feasible share-weighted allocation and divides efficient welfare equally in the symmetric replica case. The term \emph{floor} refers to its unilateral guarantee.

Two results locate this benchmark within the feasible utility region. First, $T\phi$ is Pareto-optimal among universally feasible target functions: no target can weakly improve it for every pair $(\alpha,D)$ and strictly improve it somewhere \citep[Proposition~4.8]{Csoka2026}. Second, for every fixed quota-free instance, no feasible expected-utility vector strictly dominates $1.283$ times the fair-floor vector coordinatewise \citep[Theorem~6.2]{Csoka2026}. The first statement gives the cross-instance maximality of the guarantees; the second yields the price-of-anarchy bound.

\subsection{The symmetric benchmark: why first price should work}\label{subsec:symmetric-heuristic}

The symmetric game gives the clearest explanation for why the theorem should be true. The heuristic has two parts. First, a common bid rule that is increasing in value rank should allocate the object almost efficiently. Second, support indifference in the maximin problem should make each player's utility under symmetric play close to her guaranteed utility.

A closely related monotonicity-and-support-indifference argument appears in the first-price auction example of \citet[Section~4.5]{CsokaPongraczRodivilov2025}. In that example bids are real transfers shared among the other buyers; here bids are burned virtual credits, and their force comes from the resulting change in future purchasing power.

Suppose
\[
\alpha_1=\cdots=\alpha_n=\frac1n,
\qquad
D_1=\cdots=D_n=D.
\]
For this heuristic, call a fixed player the \Bidder{} and aggregate all other players into an auxiliary \Adversary{} that controls their combined budget and seeks only to minimize the \Bidder{}'s utility. Let $C_T^{n,D}$ denote the \Bidder{}'s maximin value in this two-player game. Suppose every original player uses the same canonical maximin, or asymptotically near-maximin, policy, and let $U_T^{n,D}$ be one player's expected utility in this symmetric profile. Since each player is using a guaranteeing strategy,
\[
U_T^{n,D}\ge C_T^{n,D}-o(T).
\]

The canonical first-price bid is monotone in the player's randomized value rank. When all players are in the same state, a player with maximal value therefore submits a maximal bid, apart from atom and tie bookkeeping. The symmetric profile should consequently be nearly efficient:
\[
nU_T^{n,D}
=
T\E\left[\max_{1\le j\le n}V_j\right]-o(T).
\]

For the \Bidder{}, only the maximum of the other players' bids matters. Under symmetric play this maximum lies in the same support as the optimal mixed bid of the \Adversary{} in the two-player problem. At the saddle point, bids in that support leave the \Bidder{} indifferent. This suggests that symmetric play does not give the player substantially more than her security level:
\[
U_T^{n,D}=C_T^{n,D}+o(T).
\]
Combining efficiency and tightness gives
\[
C_T^{n,D}
=
\frac{T}{n}\E\left[\max_jV_j\right]-o(T)
=
T\phi(1/n,D)-o(T).
\]

The same picture also explains why first price is special. If the \Bidder{} loses a first-price round, the highest opposing bidder wins and pays exactly the highest opposing bid. Hence all opponents can be merged into the single \Adversary{} without changing either the winner or their total remaining budget. Under a second-price rule with at least two opponents, write $b_i$ for the \Bidder{}'s bid and let $b_{-i}^{(2)}$ be the second-highest bid among the opponents. Whenever at least two opponents outbid the \Bidder{}, the winner pays
\[
\max\{b_i,b_{-i}^{(2)}\}>b_i.
\]
The opponents then burn extra common budget through competition among themselves, whereas the \Adversary{} would need to pay only $b_i$ to defeat the \Bidder{}. Exact aggregation, and with it the tightness intuition above, is lost.

\subsection{Why the specialized implementation is useful}

The first benefit is practical. The mechanism asks only for one feasible bid per round and applies the same familiar rule under every distribution profile. Priors remain private inputs to decentralized bidding policies rather than messages that a central designer must collect and process.

Second, related artificial-credit systems have long been studied for allocation without money and have been implemented in settings such as course allocation \citep{BudishEtAl2017,GorokhBanerjeeIyer2021}. The theorem can therefore be read not only as a mechanism proposal, but also as a guarantee-theoretic justification for a recognizable practice: give users credits and let them spend those credits in repeated priority contests. Within the model, this simple rule has asymptotic Pareto price of anarchy at most $1.283$ and, more strongly, supplies the individual guarantees behind that bound.

Third, the auction supplies an endogenous procedure for allocating opportunities above the guaranteed levels. In the general prior-free GUM construction, a universally feasible guarantee vector may leave profile-specific slack relative to attainable welfare, and the framework does not select a canonical rule for distributing that slack \citep{Csoka2026}. Here bids determine the allocation round by round.

Fourth, specialization permits sharper finite-horizon guarantees. The general prior-free NTU lifting theorem has a leading loss of order $\sqrt T\,\log(T/\alpha_i)$ under bounded effects \citep[Theorem~5.7]{Csoka2026}. In the bounded independent-stationary single-good model, the present results give an $O_D(\sqrt T)$ lower error for every bounded distribution and an $O_D(\log T)$ error for tame distributions, uniformly in the initial share.

Finally, the virtual-budget auction is a natural component for models with capped or lossy real transfers. Such hybrid mechanisms interpolate between pure NTU allocation and fully transferable utility.

\subsection{Relation to prior work}

The closest conceptual line is dynamic mechanism design through guaranteed utilities. \citet{AtheySegal2013} construct an efficient budget-balanced dynamic mechanism under Bayesian incentive constraints, and \citet{CsokaEtAl2024} construct a dynamic mechanism in which each player can secure a prescribed utility even against collusive behavior of the others. Guaranteed Utility Equilibrium formalizes the case in which the vector of maximin utilities is Pareto efficient; then all Nash equilibria have that same utility vector \citep[Proposition~2.1 and Lemma~3.9]{CsokaPongraczRodivilov2025}. Its approximate version gives corresponding bounds for approximate equilibria \citep[Lemma~3.7]{CsokaPongraczRodivilov2025}. The present paper uses only the elementary first half of this logic---an equilibrium utility cannot fall below a player's security level---and proves it directly where needed. The exact Pareto-efficiency condition of GUE is replaced here by the factor $1.283$.

The prior-free extension \citep{Csoka2026} lifts universally feasible targets through general TU-GUM and NTU-GUM constructions. The present paper implements the same fair floor through a fixed first-price credit rule without a prior-reporting stage.

A broader literature studies allocation when money is absent or constrained. Repetition can relax incentive constraints by linking independent decisions \citep{JacksonSonnenschein2007}. A direct precursor is \citet{GuoConitzerReeves2009}, who study repeated single-item allocation without monetary transfers through an artificial-payment system. Prior-free payment-free allocation was studied, among other settings, by \citet{GuoConitzer2010}, and \citet{BalseiroGurkanSun2019} developed multiagent mechanisms without money. Artificial-credit mechanisms for repeated allocation appear in \citet{GorokhBanerjeeIyer2016,GorokhBanerjeeIyer2021}, while fake-money competitive equilibria have been implemented at scale in course allocation \citep{BudishEtAl2017}.

The closest algorithmic work concerns robust repeated resource sharing. \citet{BanerjeeFikiorisTardos2023} study first-price pseudo-markets with artificial credits for reusable resources and guarantee a constant fraction of an ideal quota-constrained utility. \citet{FikiorisBanerjeeTardos2025} analyze a dynamic max-min rule, and \citet{LinEtAl2025} obtain a stronger constant guarantee for repeated first-price bidding with artificial currency through randomized bids. \citet{LinEtAlSODA2026} construct robust equilibria in the same broad i.i.d. repeated single-item setting, while \citet{LinEtAlITCS2026} use competitive subsidies to break the $0.6$ robustness barrier for repeated first-price auctions with simple strategies. The present paper uses a different benchmark and strategy class: it targets the full fair-floor value, lets the guaranteeing policy react to the current budget share and remaining horizon, and evaluates security against an arbitrarily coordinated coalition.

\section{Model, benchmark, and main theorem}\label{sec:model}

\subsection{The repeated virtual-budget mechanism}\label{subsec:mechanism}

There are $n\ge2$ players and $T$ independent rounds. In each round a single good is allocated by a first-price auction. Player $i$ has initial virtual budget $B_{i,0}>0$ and initial budget share
\[
\alpha_i=\frac{B_{i,0}}{\sum_j B_{j,0}},
\qquad
\sum_i\alpha_i=1.
\]
In round $t$, player $i$ privately observes a nonnegative value
\[
V_{i,t}\sim D_i,
\]
independently across players and rounds. Players submit bids constrained by their remaining virtual budgets. The highest bidder receives the good, obtains utility equal to her value, and pays her bid in virtual money. Ties among highest bidders are broken uniformly at random.\footnote{By adding an arbitrarily small continuous perturbation to her bids, each player can make the probability of a tie involving her zero. We therefore omit a separate analysis of ties.} The winning bid is burned. The remaining virtual money has no direct utility; it only constrains future bids. We assume that the winner, the paid bid, and the remaining virtual budgets are publicly observed after each round. For the equilibrium statements, we use the standard Bayesian interpretation in which the share vector, the distribution profile, the horizon, and the mechanism are common knowledge. The playerwise guarantee theorem itself requires player $i$ to know only her own distribution $D_i$.

The game is homogeneous in money. If after any history all remaining budgets are multiplied by the same positive constant, and all future bids are scaled by the same constant, then feasibility, winners, and utilities are unchanged. Thus after every round with positive total remaining money we may normalize the total remaining virtual money to be $1$. If total remaining money ever becomes $0$, no positive future bids are feasible; one may specify any fixed tie-breaking continuation. This degenerate case is never reached in the interior normalized recursion used below, and the endpoint cases are treated separately.

\subsection{The fair-floor function and its static origin}\label{subsec:static-origin}

\begin{definition}[Fair-floor target]\label{def:fair-floor}
For a value distribution $D$ and share $\alpha\in[0,1]$, let $F_D$ denote the CDF and let
\[
Q_D(u)=\inf\{x\ge0:F_D(x)\ge u\},\qquad u\in(0,1),
\]
be the generalized quantile; its endpoint values are immaterial. For $\alpha\in(0,1]$, using an independent $Z\sim U[0,1]$ and the randomized CDF-rank
\[
R_D(V,Z)=F_D(V^-)+Z\bigl(F_D(V)-F_D(V^-)\bigr),
\]
we write
\[
\phi(\alpha,D)=\E\left[V\,R_D(V,Z)^{1/\alpha-1}\right].
\]
Equivalently, if $Q_D$ is the generalized quantile function, then for $\alpha\in(0,1]$,
\begin{equation}\label{eq:phi-def}
\phi(\alpha,D)=\int_0^1 Q_D(u)u^{1/\alpha-1}\,du.
\end{equation}
At $\alpha=1$, the rank formula is interpreted as $\E[V]$ (the value of $R^0$ on the null event $R=0$ is irrelevant). We set $\phi(0,D)=0$. If $D$ is integrable, then $\phi(1,D)=\E[V]$.
\end{definition}

\begin{lemma}[Randomized-rank coupling]\label{lem:rank-coupling}
Let $V\sim D$, let $Z\sim U[0,1]$ be independent, and put $R=R_D(V,Z)$. Then
\[
R\sim U[0,1],
\qquad
Q_D(R)=V\quad\text{almost surely}.
\]
More generally, for every nonnegative or integrable Borel function $h$,
\begin{equation}\label{eq:rank-coupling}
\E[h(V,R)]=\int_0^1 h(Q_D(u),u)\,du.
\end{equation}
In particular, the two formulas in Definition \ref{def:fair-floor} agree, including when $D$ has atoms.
\end{lemma}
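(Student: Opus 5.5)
The plan is to prove the two distributional claims first and then obtain \eqref{eq:rank-coupling} from them almost for free.

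\textbf{Step 1: $R$ is uniform.} Fix $u\in(0,1)$ and set $x=Q_D(u)$. By right-continuity of $F_D$, the infimum in the definition of $Q_D$ is attained, so $F_D(x)\ge u$ and $F_D(x^-)\le u$. We compute $P(R\le u)$ by splitting according to the value of $V$:
\begin{itemize}[leftmargin=*]
\item If $V<x$, then $F_D(V)\le F_D(x^-)\le u$, hence $R\le F_D(V)\le u$. This event has probability $F_D(x^-)$.
\item If $V>x$, then $R\ge F_D(V^-)\ge F_D(x)\ge u$. So $R\le u$ can hold only when $R=u$, which happens only if $F_D(V^-)=u$ with $V>x$. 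Outside a null set this forces an atomless flat piece, and it contributes probability zero.
\item If $V=x$, then either $D$ has an atom at $x$ of mass $p=F_D(x)-F_D(x^-)>0$, or this event is null. In the atom case, $R\le u$ iff $Z\le (u-F_D(x^-))/p$, giving probability $u-F_D(x^-)$.
\end{itemize}
Summing the three cases gives $P(R\le u)=u$, so $R\sim U[0,1]$.

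\textbf{Step 2: $Q_D(R)=V$ almost surely.} Condition on $V=v$; then $R$ lies in $[F_D(v^-),F_D(v)]$.
\begin{itemize}[leftmargin=*]
\item If $F_D(v^-)<R<F_D(v)$, then $Q_D(R)=v$. This holds because $F_D(y)\le F_D(v^-)<R$ for all $y<v$, while $F_D(v)\ge R$.
\item The endpoint cases need care: $R=F_D(v)$ with an atom, or $D$ continuous at $v$ so that $R=F_D(v)$ exactly. Here $Q_D(F_D(v))$ may be smaller than $v$ when $F_D$ is flat just left of $v$, i.e.\ $F_D(y)=F_D(v)$ for some $y<v$.
\item Such $v$ lie in the set $\{v:\ F_D(v-\varepsilon)=F_D(v)\text{ for some }\varepsilon>0\}$, which is a countable union of left-open flat intervals of $F_D$. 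Each such interval carries $D$-mass zero, so the set is $D$-null.
\item Atoms with $Z\in\{0,1\}$ form a null event.
\end{itemize}
I expect this to be the main obstacle: a clean statement that the set of $v$ with $Q_D(F_D(v))\ne v$ is $D$-null, handling both atoms and flat stretches of $F_D$.

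\textbf{Step 3: the identity \eqref{eq:rank-coupling}.} By Step 2, $h(V,R)=h(Q_D(R),R)$ almost surely. Since $R$ is uniform by Step 1, $\E[h(Q_D(R),R)]=\int_0^1 h(Q_D(u),u)\,du$. This uses only Borel measurability of the map $u\mapsto (Q_D(u),u)$, which holds because $Q_D$ is monotone. The nonnegative case is immediate, and the integrable case follows by splitting $h=h^+-h^-$.

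\textbf{Step 4: agreement of the two formulas.} For $\alpha\in(0,1)$, apply \eqref{eq:rank-coupling} with $h(v,r)=v\,r^{1/\alpha-1}$, which is nonnegative. This gives $\E[V R^{1/\alpha-1}]=\int_0^1 Q_D(u)u^{1/\alpha-1}\,du$, which is exactly \eqref{eq:phi-def}, with or without atoms. At $\alpha=1$, take $h(v,r)=v$; both sides equal $\E[V]$.
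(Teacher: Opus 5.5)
Your proof is correct. Your Step~1 is essentially the paper's uniformity argument: split at $q=Q_D(u)$ according to $V<q$, $V=q$, $V>q$. You are even a little more careful than the paper about the boundary event $\{V>q,\ R=u\}$.

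After that the two routes differ. The paper never isolates an exceptional set. It observes that $\{V\le x\}\subseteq\{R\le u\}$ when $u\ge F_D(x)$, and $\{R\le u\}\subseteq\{V\le x\}$ when $u<F_D(x)$. Combined with uniformity, this gives $\Pr(V\le x,\ R\le u)=\min\{F_D(x),u\}$. The same formula holds for $(Q_D(U),U)$ by the equivalence $Q_D(U)\le x\iff U\le F_D(x)$. Equality of these joint laws is precisely \eqref{eq:rank-coupling}. Then $Q_D(R)=V$ almost surely follows because the Borel graph of $Q_D$ has full mass under the law of $(Q_D(U),U)$.

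You go the other way. You first prove $Q_D(R)=V$ pointwise outside the explicit null set $\{Z\in\{0,1\}\}\cup\{V\in S\}$, with $S=\{v:\ F_D(v-\varepsilon)=F_D(v)\text{ for some }\varepsilon>0\}$. You then push forward the uniform law of $R$. Your version shows concretely where the identity can fail, at the cost of the countable-flat-intervals bookkeeping. The paper's two set inclusions are shorter and need no case analysis of atoms or flat stretches of $F_D$.

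Two cosmetic remarks, neither affecting correctness since you discard $Z\in\{0,1\}$ anyway:
\begin{enumerate}
\item Step~2 reads more cleanly as the inclusion $\{Q_D(R)\ne V\}\subseteq\{Z\in\{0,1\}\}\cup\{V\in S\}$ than as conditioning on the possibly null event $\{V=v\}$.
\item At an atom the delicate endpoint is $R=F_D(v^-)$ (that is, $Z=0$). By contrast, $R=F_D(v)$ always gives $Q_D(R)=v$.
\end{enumerate}
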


\begin{proof}
Write $F=F_D$. Fix $u\in(0,1)$, let $q=Q_D(u)$, and put $a=F(q^-)$ and $b=F(q)$. The generalized-inverse inequalities give $a\le u\le b$. On $\{V<q\}$ one has $R\le a$, while on $\{V>q\}$ one has $R\ge b$. Conditional on $V=q$, the variable $R$ is uniform on $[a,b]$. Consequently,
\[
\Pr(R\le u)
=a+(b-a)\frac{u-a}{b-a}=u,
\]
with the evident convention when $a=b$. The endpoint values follow trivially, so $R$ is uniform.

For every real $x$ and $u\in[0,1]$,
\[
\Pr(V\le x,\ R\le u)=\min\{F(x),u\}.
\]
Indeed, if $u\ge F(x)$ then $V\le x$ implies $R\le F(V)\le F(x)\le u$; if $u<F(x)$, then $V>x$ implies $R\ge F(V^-)\ge F(x)>u$, so the event $\{R\le u\}$ already forces $V\le x$. If $U\sim U[0,1]$, the generalized-inverse identity $Q_D(U)\le x$ if and only if $U\le F(x)$ shows that
\[
\Pr(Q_D(U)\le x,\ U\le u)=\min\{F(x),u\}.
\]
Hence $(V,R)$ and $(Q_D(U),U)$ have the same joint distribution. Formula \eqref{eq:rank-coupling} follows, and so do $Q_D(R)=V$ almost surely and the asserted equivalence of the two fair-floor formulas.
\end{proof}

\begin{proposition}[Static weighted-rank origin of the fair floor]\label{prop:static-origin}
Let $\alpha_1,\ldots,\alpha_n>0$ satisfy $\sum_i\alpha_i=1$. Draw values $V_i\sim D_i$ independently, construct independent randomized ranks
\[
R_i=R_{D_i}(V_i,Z_i)\sim U[0,1],
\]
and allocate the object to the player with the largest score $R_i^{1/\alpha_i}$. Then player $i$ receives the object with probability $\alpha_i$, and her expected utility is exactly
\[
\Pr(i\text{ receives the object})=\alpha_i,
\qquad
\E\left[V_i\one_{\{i\text{ receives the object}\}}\right]
=\phi(\alpha_i,D_i).
\]
Consequently, the vector $(\phi(\alpha_i,D_i))_{i=1}^n$ is jointly feasible ex ante in the one-shot allocation problem for every profile $(D_i)_i$ and every share vector $(\alpha_i)_i$.

If $\alpha_i=1/n$ and $D_i=D$ for every $i$, the weighted-rank rule is ex post efficient and
\begin{equation}\label{eq:symmetric-replica}
\phi(1/n,D)=\frac1n\E\left[\max_{1\le j\le n}V_j\right].
\end{equation}
\end{proposition}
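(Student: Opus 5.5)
The plan is to condition on player $i$'s own rank and use the independence and uniformity of the ranks supplied by Lemma~\ref{lem:rank-coupling}. Since each $R_j=R_{D_j}(V_j,Z_j)$ is a function of the independent pair $(V_j,Z_j)$, the ranks $R_1,\dots,R_n$ are independent. Lemma~\ref{lem:rank-coupling} shows each is $U[0,1]$, and since each $R_j$ is continuous, ties among scores occur with probability zero. Fix $i$ and condition on $(V_i,R_i)$ with $R_i=r\in(0,1)$. Player $i$ wins exactly when $R_j^{1/\alpha_j}<r^{1/\alpha_i}$ for all $j\ne i$, that is, when $R_j<r^{\alpha_j/\alpha_i}$. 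Independence of the $R_j$ from $(V_i,R_i)$ then gives the conditional winning probability
\[
\prod_{j\ne i}r^{\alpha_j/\alpha_i}=r^{(1-\alpha_i)/\alpha_i}=r^{1/\alpha_i-1},
\]
using $\sum_j\alpha_j=1$.

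Next I apply \eqref{eq:rank-coupling} twice, first with $h(v,u)=u^{1/\alpha_i-1}$ and then with $h(v,u)=v\,u^{1/\alpha_i-1}$. Both functions are nonnegative, so the lemma applies with no integrability hypothesis. The first gives $\Pr(i\text{ wins})=\int_0^1u^{1/\alpha_i-1}\,du=\alpha_i$. The second, after taking the tower expectation over $(V_i,R_i)$, gives $\E[V_i\one_{\{i\text{ wins}\}}]=\int_0^1Q_{D_i}(u)u^{1/\alpha_i-1}\,du=\phi(\alpha_i,D_i)$ by \eqref{eq:phi-def}. The case $\alpha_i=1$ forces $n=1$, which is excluded since $n\ge2$, or else it is trivial. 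Joint feasibility holds because this is a single allocation rule that works for every profile.

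For the symmetric case $\alpha_i=1/n$ and $D_i=D$, every score is $R_i^n$, so the rule gives the object to the largest rank. Because $Q_D$ is nondecreasing and $Q_D(R_j)=V_j$ almost surely, the winner's value is $\max_jV_j$, so the rule is ex post efficient. Total welfare is therefore $\E[\max_jV_j]$. By exchangeability each player receives an equal share of it, and that share is $\phi(1/n,D)$ by the first part, which proves \eqref{eq:symmetric-replica}.

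The only delicate step is the conditioning when atoms are present. Conditional on the event $\{V_i=v\}$, the rank $R_i$ is not deterministic, so the argument must condition on the rank itself. Lemma~\ref{lem:rank-coupling} handles this through the joint law of $(V_i,R_i)$, and the continuity of the ranks removes any need for a separate tie-breaking analysis.
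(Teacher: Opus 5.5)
Your proposal is correct and follows the paper's proof essentially step for step. Like the paper, you condition on $R_i=r$ to obtain the winning probability $r^{1/\alpha_i-1}$, integrate against $Q_{D_i}$ via Lemma~\ref{lem:rank-coupling} (your explicit nonnegative test functions $h(v,u)=u^{1/\alpha_i-1}$ and $h(v,u)=v\,u^{1/\alpha_i-1}$ just spell out that appeal), and deduce the symmetric identity from monotonicity of $Q_D$ together with $Q_D(R_j)=V_j$ almost surely.
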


\begin{proof}
Conditional on $R_i=r$, player $i$ wins precisely when
\[
R_j^{1/\alpha_j}<r^{1/\alpha_i}
\quad\text{for every }j\ne i.
\]
The score distributions are continuous, so ties have probability zero. Independence and uniformity of the randomized ranks give
\[
\Pr(i\text{ wins}\mid R_i=r)
=\prod_{j\ne i}r^{\alpha_j/\alpha_i}
=r^{(1-\alpha_i)/\alpha_i}
=r^{1/\alpha_i-1}.
\]
Using Lemma \ref{lem:rank-coupling},
\[
\E\left[V_i\one_{\{i\text{ wins}\}}\right]
=\int_0^1Q_{D_i}(r)r^{1/\alpha_i-1}\,dr
=\phi(\alpha_i,D_i).
\]
The same calculation without the value factor gives
\[
\Pr(i\text{ wins})=\int_0^1r^{1/\alpha_i-1}\,dr=\alpha_i.
\]
This proves joint feasibility and the expected-share identity. In the symmetric case all score transformations are the same, so the highest randomized rank wins. Since $Q_D$ is nondecreasing and $Q_D(R_j)=V_j$ almost surely, that player has a maximal value. The rule is therefore ex post efficient; symmetry gives each player a $1/n$ share of its expected welfare, proving \eqref{eq:symmetric-replica}.
\end{proof}

\subsection{Pareto price of anarchy}\label{subsec:pareto-poa}

\begin{definition}[Pareto price of anarchy]\label{def:pareto-poa}
Fix a one-round instance $(\alpha_N,D_N)$, and let $\mathcal U(\alpha_N,D_N)$ be the set of expected-utility vectors attainable by arbitrary measurable, possibly randomized, quota-free allocation rules. For $\lambda\ge1$, a vector $x\in\mathbb R_+^N$ has \emph{Pareto price of anarchy at most $\lambda$} if there is no $u\in\mathcal U(\alpha_N,D_N)$ such that
\[
u_i>\lambda x_i
\qquad\text{for every }i\in N.
\]
Equivalently, no feasible vector improves every coordinate of $x$ by a common factor strictly larger than $\lambda$. This comparison is invariant under separate positive rescalings of the players' utility units.

The repeated mechanism has \emph{asymptotic Pareto price of anarchy at most $\lambda$} if every accumulation point of normalized expected-utility vectors of Bayesian Nash equilibria has Pareto price of anarchy at most $\lambda$ in the corresponding one-round feasible region.
\end{definition}

\begin{proposition}[The $1.283$ fair-floor bound]\label{prop:1283-benchmark}
Assume that the distributions $D_i$ are nonnegative and integrable, and put
\[
g_i=\phi(\alpha_i,D_i).
\]
There is no $u\in\mathcal U(\alpha_N,D_N)$ such that
\[
u_i>1.283\,g_i
\qquad\text{for every }i\in N.
\]
Thus the fair-floor vector has Pareto price of anarchy at most $1.283$ relative to the quota-free feasible region.
\end{proposition}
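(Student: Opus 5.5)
The plan is to reduce the statement to a weighted-welfare inequality by convex duality, and then to treat that inequality as the extremal problem solved in \citep[Theorem~6.2]{Csoka2026}. Nothing here uses the repeated game; it is a statement about the one-round quota-free region $\mathcal U(\alpha_N,D_N)$ only.

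\textbf{Step 1 (reduction to weighted welfare).} The set $\mathcal U(\alpha_N,D_N)$ is convex, since randomized rules may be mixed, and it is downward closed in $\mathbb R_+^N$. Hence the open orthant $\{u:u_i>\lambda g_i\ \forall i\}$ misses $\mathcal U$ if and only if a separating weight vector exists. Concretely, it suffices to find $w\in\mathbb R_+^N\setminus\{0\}$ with
\[
\sup_{u\in\mathcal U}\sum_i w_iu_i\le\lambda\sum_i w_ig_i .
\]
The left side is attained by giving the object to the largest weighted value, so it equals $\E[\max_i w_iV_i]$. The target inequality therefore becomes
\[
\E\Bigl[\max_i w_iV_i\Bigr]\le 1.283\sum_i w_i\,\phi(\alpha_i,D_i)
\quad\text{for some }w\ge0,\ w\ne0 .
\]
Degenerate coordinates with $g_i=0$ can be handled separately: either take $w_i=0$ there, or note that $u_i>0=\lambda g_i$ is then trivially allowed and such players can be dropped.

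\textbf{Step 2 (choice of weights and reduction to extremal distributions).} I would not fix $w$ in advance; instead I would apply the minimax theorem. The quantity
\[
\max_w\ \Bigl(\lambda\sum_i w_ig_i-\E\bigl[\max_i w_iV_i\bigr]\Bigr)
\]
is concave in $w$, and a mixed rule in $u$ makes the problem bilinear. So it suffices to show that for every feasible rule $\pi$ there is a player with $u_i(\pi)\le\lambda g_i$.

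Fix $\pi$ and let $x_i(v_i)$ be player $i$'s interim winning probability. This satisfies a Border-type constraint: for every threshold vector, the total probability that the winners come from the top sets is at most the probability that some player lies in her top set. Among interim rules with a given allocation frequency $p_i=\E[x_i]$, player $i$'s utility is maximized by a rank threshold. This gives
\[
u_i\le\int_{1-p_i}^1Q_{D_i}(r)\,dr .
\]
Compare this with
\[
g_i=\int_0^1Q_{D_i}(r)r^{1/\alpha_i-1}\,dr .
\]
Since $Q_{D_i}$ is nondecreasing, the worst case for the ratio $\int_{1-p}^1Q/\int Q\,r^{1/\alpha-1}$ over distributions is a two-point or threshold quantile: $Q=\one_{[1-s,1]}$ for some $s$. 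It is in fact a mixture of such, and ratios of linear functionals are extremized at extreme points. With this quantile, $g_i=\bigl(1-(1-s_i)^{1/\alpha_i}\bigr)\alpha_i$ and $u_i\le\min(p_i,s_i)$.

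\textbf{Step 3 (the finite-dimensional optimization; main obstacle).} For threshold quantiles the feasible region is explicit. Each player $i$ is "active" independently with probability $s_i$, and only active players contribute value. A rule giving each player $y_i$ must therefore satisfy, for every subset $S$,
\[
\sum_{i\in S}y_i\le 1-\prod_{i\in S}(1-s_i).
\]
The claim then reduces to the following. For all $n$, all $\alpha$ in the simplex and all $s\in[0,1]^N$, there is no $y$ in this polymatroid with $y_i>\lambda\alpha_i\bigl(1-(1-s_i)^{1/\alpha_i}\bigr)$ for every $i$. The strongest constraint is the full set $S=N$, which gives the inequality
\[
\lambda\sum_i\alpha_i\bigl(1-(1-s_i)^{1/\alpha_i}\bigr)\ \ge\ 1-\prod_i(1-s_i).
\]
Subsets are then handled by renormalizing the $\alpha_i$ within $S$, using monotonicity of $\alpha\mapsto\alpha(1-(1-s)^{1/\alpha})$.

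Writing $1-s_i=e^{-\alpha_i t_i}$ turns this into
\[
\sup\frac{1-e^{-\sum_i\alpha_it_i}}{\sum_i\alpha_i(1-e^{-t_i})}
\]
over probability vectors $\alpha$ and $t\ge0$. By concavity, the supremum is approached in the limit of many small players with a chosen distribution of $t$, and it is a one-dimensional variational problem whose value should be the constant $\approx1.283$.

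I expect this last optimization, together with justifying that threshold distributions are extremal simultaneously for all players, to be the main obstacle. As stated, the reduction to threshold quantiles is done player by player, so it still has to be checked against the joint polymatroid constraint. Here I would simply appeal to \citep[Theorem~6.2]{Csoka2026}, which carries out exactly this extremal computation, and verify that its quota-free feasible region coincides with $\mathcal U(\alpha_N,D_N)$ of Definition \ref{def:pareto-poa}.
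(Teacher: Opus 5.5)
The paper's own proof is the single citation \citet[Theorem~6.2]{Csoka2026}, and your closing appeal is that same citation; on its own, that appeal settles the proposition. The problem is Steps~2--3, which you describe as the computation the cited theorem ``carries out exactly'': they contain a false reduction. Step~1 is fine, since it is an exact separation reformulation.

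\textbf{Step 3.} To exclude a $y$ with $y_i>\lambda g_i$ for all $i$, you need only \emph{one} subset $S$ with $\lambda\sum_{i\in S}g_i\ge 1-\prod_{i\in S}(1-s_i)$. Which subset works depends on the instance, and the full set $N$ is often not it. Take two players with $\alpha_1=\delta$, $s_1=1-e^{-1}$, $\alpha_2=1-\delta$, $s_2=\varepsilon$. Then $g_1\le\delta$, and $g_2\le\varepsilon$ by Bernoulli's inequality, while $1-(1-s_1)(1-s_2)\ge 1-e^{-1}$. Hence
\[
\frac{1-e^{-\sum_i\alpha_it_i}}{\sum_i\alpha_i\bigl(1-e^{-t_i}\bigr)}\ \ge\ \frac{1-e^{-1}}{\delta+\varepsilon}\longrightarrow\infty .
\]
So your one-dimensional variational problem has value $+\infty$, not about $1.283$: spreading $t$ to the extremes drives the ratio up without bound. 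In this instance the violated constraint is the singleton $S=\{2\}$, whose ratio $s_2/g_2$ is close to $1$. Even granting the threshold reduction, the correct finite-dimensional problem is the max--min
\[
\sup_{\alpha,s}\ \min_{\emptyset\ne S\subseteq N}\ \frac{1-\prod_{i\in S}(1-s_i)}{\sum_{i\in S}\alpha_i\bigl(1-(1-s_i)^{1/\alpha_i}\bigr)} ,
\]
and the inner minimum cannot be dropped. ``Handling subsets by renormalizing'' has the quantifier backwards.

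\textbf{Step 2.} The reduction to threshold quantiles is not valid as stated, and this is more than a bookkeeping check. The bound $u_i\le\int_{1-p_i}^1Q_{D_i}$ and the per-player extreme-ray argument treat $p_i$ as fixed. But the $p_i$ are coupled through the Border constraints of the \emph{original} profile, and replacing each $D_i$ by its own worst threshold law changes those constraints. In the weighted form of Step~1, the quantity to control is
\[
\min_w\Bigl(\E\bigl[\max_jw_jV_j\bigr]-\lambda\sum_jw_j\phi(\alpha_j,D_j)\Bigr).
\]
This is a minimum of functionals that are convex in the quantile profile, so it is not convex itself, and the extreme-point principle you invoke does not apply.

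You therefore have two options. One is to present the proposition as the paper does: a direct import of \citet[Theorem~6.2]{Csoka2026}, after checking that its quota-free region and fair-floor vector match Definition~\ref{def:pareto-poa} and \eqref{eq:phi-def}. The other, for a self-contained route, is to supply a genuine joint reduction to threshold laws and then bound the max--min over subsets above.
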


\begin{proof}
This is \citet[Theorem~6.2]{Csoka2026}.
\end{proof}

\begin{remark}[Fixed-instance efficiency and universal optimality]\label{rem:maximality}
Proposition \ref{prop:1283-benchmark} compares the fair-floor vector with the quota-free feasible utility region of a fixed instance. In the corresponding repeated single-good environment, \citet[Proposition~4.8]{Csoka2026} proves the cross-instance statement that
\[
f^*(\alpha,D)=T\phi(\alpha,D)
\]
is Pareto-optimal among universally feasible target functions. The auction theorem below implements this target, while Proposition \ref{prop:1283-benchmark} converts the guarantee into the price-of-anarchy bound.
\end{remark}

\begin{remark}[Three elementary benchmarks]\label{rem:phi-examples}
The formula specializes as follows.
\begin{enumerate}[label=\textup{(\roman*)}, leftmargin=2.2em]
\item If $V=v$ almost surely, then $\phi(\alpha,D)=\alpha v$. With no variation in urgency, the benchmark reduces to the proportional share.
\item If $V\sim U[0,1]$, then $\phi(\alpha,D)=\alpha/(1+\alpha)$. In particular, $\phi(1/n,D)=1/(n+1)$, the $1/n$ share of the expected maximum $n/(n+1)$.
\item If $V=v$ with probability $p$ and $V=0$ otherwise, then
\[
\phi(\alpha,D)=\alpha v\bigl(1-(1-p)^{1/\alpha}\bigr).
\]
This illustrates the upper-rank emphasis: a small-share player concentrates her entitlement on the rare valuable rounds rather than receiving merely $\alpha\E[V]$.
\end{enumerate}
\end{remark}

\subsection{Main theorem}\label{subsec:target}

\begin{theorem}[Main fair-floor guarantee]\label{thm:main-guarantee}
Fix a player $i$ in the repeated first-price virtual-budget mechanism and assume that $D_i$ is nonnegative and integrable. There is a nonnegative function
\[
\rho_{D_i}(T)=o_{D_i}(T),
\]
independent of the initial share $\alpha_i$, such that for every horizon $T$ player $i$ has a behavioral strategy satisfying
\[
\E[U_i^T]\ge T\phi(\alpha_i,D_i)-\rho_{D_i}(T)
\]
against every joint, possibly collusive, strategy of the other players. The strategy depends on $D_i$, the remaining horizon, the current budget share, the current value, and private randomization, but not on the other players' distributions.

Write
\[
\phi_i(\alpha)=\phi(\alpha,D_i),
\qquad
H_T=\sum_{m=1}^T\frac1m.
\]
The following explicit choices are available:
\begin{enumerate}[label=\textup{(\roman*)}, leftmargin=2.4em]
\item if $D_i$ satisfies Assumption~\ref{ass:tame}, then with $M_i=\|Q_{D_i}\|_\infty$, $\gamma_i=\phi_i'(1)>0$, $K_i=M_i/\gamma_i$, and $L_i=\sup_{\alpha\in[0,1]}|\phi_i''(\alpha)|$,
\[
\rho_{D_i}(T)=\bigl(2M_iK_i+2L_iK_i^2\bigr)H_T;
\]
\item if $D_i$ is bounded and nonzero, then with $M_i=\|Q_{D_i}\|_\infty$ and $\gamma_i=\phi_i'(1)>0$,
\[
\rho_{D_i}(T)=\frac{40M_i^2}{\gamma_i}\sqrt T;
\]
\item if $\E[V_i^{p_i}]<\infty$ for some $p_i>1$, then one may take
\[
\rho_{D_i}(T)=O_{D_i}\!\left(T^{(p_i+3)/(2p_i+2)}\right).
\]
For $D_i\equiv0$, the zero strategy gives $\rho_{D_i}=0$.
\end{enumerate}

In the two-player Bidder--Adversary game, the corresponding upper value satisfies
\[
\upC_T(\alpha,D)\le T\phi(\alpha,D)+O_D(\log T)
\]
for every bounded nonnegative $D$, and $\upC_T(\alpha,D)\le T\phi(\alpha,D)+o_D(T)$ for every integrable nonnegative $D$. Thus $T\phi$ is the leading term of both auxiliary values, with the stated lower-side smoothness distinction.
\end{theorem}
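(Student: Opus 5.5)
Because first price permits exact aggregation (Section~\ref{subsec:symmetric-heuristic}), I would first reduce to a two-player game. Merge all opponents of player $i$ into a single \Adversary{} that controls their combined remaining budget, knows everything, and tries to minimize the \Bidder{}'s utility. Whenever the \Bidder{} loses, the winner pays exactly the highest opposing bid, so the \Adversary{} can reproduce any coalition outcome while burning the same amount. Hence it suffices to bound the \Bidder{}'s lower value $\lowC_T(\alpha,D)$ in this game from below. After normalizing total money to $1$, the state is $(\alpha,m)$, where $\alpha$ is the \Bidder{}'s share and $m$ the remaining horizon. The \Adversary{}'s bid can be taken deterministic given the history, since it has no private value. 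If the \Bidder{} bids $b$ and the \Adversary{} bids $a$, the next share is $(\alpha-b)/(1-b)$ when the \Bidder{} wins ($b>a$) and $\alpha/(1-a)$ when she loses.

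Second, I would build a one-step potential argument with target $W_m(\alpha)=m\phi(\alpha,D)-\rho(m)$. Using the randomized rank $R$ from Lemma~\ref{lem:rank-coupling}, the candidate strategy bids a monotone function $b=\beta_{\alpha,m}(R)$. The natural choice comes from the equalizer heuristic: pick $\beta$ so that for every \Adversary{} bid $a$ in the support, the one-step loss-or-gain is indifferent. Concretely, we need
\[
\E\bigl[V\one_{\{b>a\}}\bigr]+\E\Bigl[\one_{\{b>a\}}\,W_{m-1}\bigl(\tfrac{\alpha-b}{1-b}\bigr)+\one_{\{b\le a\}}\,W_{m-1}\bigl(\tfrac{\alpha}{1-a}\bigr)\Bigr]\ \ge\ W_m(\alpha)
\]
for all $a\in[0,1-\alpha]$; the bound $a\le 1-\alpha$ holds because the \Adversary{}'s budget is $1-\alpha$. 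Linearizing $W_{m-1}$ with slope $(m-1)\phi'(\alpha)$ reduces this to a static inequality. In that inequality the \Bidder{} wins with rank-probability matching the weighted-rank rule of Proposition~\ref{prop:static-origin}, i.e.\ against a virtual aggregated opponent of share $1-\alpha$ whose score is $R^{1/(1-\alpha)}$-like. The bid is then chosen as the payment making the share drift exactly compensate value: $b\approx$ (value earned above the $\phi$-slope)$/(m\phi'(\alpha))$, scaled by $(1-\alpha)$. The first-order term vanishes by the identity $\phi(\alpha)=\int Q(u)u^{1/\alpha-1}du$ together with its derivative in $\alpha$; I would verify this by an envelope or integration-by-parts computation. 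The second-order Taylor remainder, of size $L\,b^2 m$ with $b=O(M/(m\gamma))$, is $O(LK^2/m)$. Adding the rounding term $O(MK/m)$ and summing over $m$ gives the $H_T$ bound (i) under the tameness assumption, via $\phi''$ bounded and $\phi'(1)=\gamma>0$ giving $\phi'\ge\gamma$ on $[0,1]$ by concavity. Here I must check that bids never exceed $\alpha$, which holds near $\alpha\to0$ since $\phi'(\alpha)\to$ a large value. I would also treat the endpoints $\alpha\in\{0,1\}$ separately.

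The main obstacle is the step where $\phi''$ is unbounded or $D$ is merely bounded or integrable, i.e.\ cases (ii) and (iii). For (ii), I would try to replace $\phi$ by a smoothed or concavified surrogate, e.g.\ evaluate $\phi$ at the share with a horizon-dependent mollification or cap bids at $\sqrt{1/T}$ scale, trading Taylor error $O(\sqrt T)$ against the truncation loss. For (iii), I would truncate $V$ at level $T^{c}$, apply (ii) to the truncated distribution with $M=T^c$, and bound the tail loss $T\,\E[V\one_{V>T^c}]\le T^{1-c(p-1)}\E V^p$. Optimizing $c$ yields the stated exponent $(p+3)/(2p+2)$; one also needs $\gamma$ for the truncated law to stay bounded away from $0$. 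Finally, for the upper-value claim, I would have the \Adversary{} play the symmetric-replica equalizer: mimic $n-1$ copies, or a continuum share-$(1-\alpha)$ opponent, bidding per the same $\beta$. The reverse one-step inequality with the same Taylor bookkeeping then gives $+O(\log T)$ for bounded $D$, and truncation gives $o(T)$ for integrable $D$.
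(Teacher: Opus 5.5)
\textbf{Overall.} Your architecture matches the paper's:
\begin{itemize}
\item first-price aggregation into a single \Adversary{};
\item a barrier $m\phi(\alpha)-\rho(m)$ propagated one round at a time;
\item a static tangent game solved by an equalizing monotone rank bid that wins with probability $r^{1/\alpha-1}$ (the paper's $b=B_\alpha(r)/(m\mu_\alpha)$ with $\mu_\alpha=(1-\alpha)\phi'(\alpha)$);
\item an upper-value \Adversary{} that applies the same rule at a virtual rank $S_\alpha$ with $\Pr(S_\alpha\le s)=s^{(1-\alpha)/\alpha}$;
\item truncation at $M=T^{1/(2p+2)}$, with $\gamma_{D^{(M)}}$ bounded below, for (iii).
\end{itemize}

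\textbf{The main gap: case (ii).} This is the step you flag yourself, and for it you only name possible remedies. Your (iii) is obtained by applying (ii) to truncated laws. The $o(T)$ guarantee for a general integrable $D$, which needs a diagonal choice of truncation level that you do not mention, rests on (ii) in the same way. So everything beyond the tame case is unproved.

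The missing ingredient is that for \emph{every} bounded $D$, $\phi$ is $C^2$ on $(0,1]$ with $|\phi''(\alpha)|\le 4M/\alpha$. To see this, put $A_k(\alpha)=\int_0^1Q(u)(-\log u)^ku^{1/\alpha-1}\,du$. Then $\phi''=-2\alpha^{-3}A_1+\alpha^{-4}A_2$, with $A_1\le M\alpha^2$ and $A_2\le 2M\alpha^3$.

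The trade-off is then in the share, not in the bid size. Bid canonically only when $m\ge16$ and $\alpha\ge K/\sqrt m$:
\begin{itemize}
\item On the winning branch the displacement is at most $2K/m\le\alpha/2$, so the curvature loss is at most $16MK^2/(m\alpha)\le 16MK/\sqrt m$.
\item The losing branch stays above $\alpha$ and loses at most $8MK/\sqrt m$.
\item With $A=40MK$, the increment $A(\sqrt m-\sqrt{m-1})\ge A/(2\sqrt m)$ absorbs the per-round error $20MK/\sqrt m$.
\item Below the threshold, $m\phi(\alpha)\le mM\alpha<MK\sqrt m\le A\sqrt m$. So the clipped barrier $\max\{0,m\phi(\alpha)-A\sqrt m\}$ vanishes and a zero bid suffices.
\end{itemize}
This is where the constant $40M^2/\gamma$ comes from. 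Your mollification idea could also be pushed through: a horizon-dependent left average of $\phi$ over a window of width $\asymp K/\sqrt m$ has curvature at most $M$ divided by the width, by concavity alone. But that bookkeeping is not in your proposal.

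\textbf{Two further steps fail as written, even in the tame case.}

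\emph{Small shares.} Your feasibility argument is false. For bounded $D$ one has $\phi'\le M$ everywhere (indeed $\phi'(0^+)=\mathrm{ess\,sup}\,V$), so nothing blows up at $0$. For $V\equiv1$ the canonical bid is exactly $1/m$, which is infeasible whenever $\alpha<1/m$. You need an inactive region ($\alpha<K/m$ or $m<2K$). There $m\phi(\alpha)<A_{\rm reg}H_m$, the \Bidder{} bids $0$, and the barrier is clipped at $0$.

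\emph{Large \Adversary{} bids.} Your $O(1/m)$ Taylor remainder is valid only for bids of order $1/m$, but the one-step inequality must hold for every $a\le 1-\alpha$. An \Adversary{} bid of order one moves the share by order one. The concave remainder, which works against the \Bidder{} here, is then of order $ma^2$. Bids above $e_{\max}=B_\alpha(1)/(m\mu_\alpha)$ must instead be handled by monotonicity of $\phi$, which reduces them to $e_{\max}$. At $e_{\max}$, the identity $\phi'(\alpha)\alpha e_{\max}=\phi(\alpha)/m$ already gives $m\phi(\alpha)-O(1/m)$.

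\textbf{The upper bound.} To get the logarithmic upper bound for every bounded $D$, not only tame ones, you need concavity rather than ``the same Taylor bookkeeping.'' Concavity makes the linearization an upper bound with no remainder, even for arbitrarily large \Bidder{} bids. The \Adversary{} also needs its own inactive region $1-\alpha<K/m$, where its canonical bid would be infeasible, with barrier $\min\{m\E[V],\,m\phi(\alpha)+A_{\rm up}H_m\}$.
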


\begin{proof}
The tame and bounded original-game guarantees are Corollaries~\ref{cor:regular-guarantee} and \ref{cor:bounded-guarantee}. The integrable and finite-moment statements follow from Corollaries~\ref{cor:integrable-guarantee} and \ref{cor:finite-moment-rate}. The auxiliary upper estimates are Theorem~\ref{thm:upper-log} and Corollary~\ref{cor:integrable-guarantee}. All of these results are proved below from the explicit strategy-transfer Lemma~\ref{lem:aggregation}.
\end{proof}

\begin{corollary}[Price of anarchy]\label{cor:1283-poa}
Fix a finite population, a positive initial share vector, and nonnegative integrable stationary value distributions. Let $T_k\to\infty$, and for each $k$ let $s^{T_k}$ be a $\delta_{T_k}$-Bayesian Nash equilibrium of the $T_k$-round virtual first-price auction, where $\delta_{T_k}=o(T_k)$. Write
\[
x_i^{T_k}=\frac{1}{T_k}\E\bigl[U_i^{T_k}(s^{T_k})\bigr].
\]
Every accumulation point $x$ of $(x^{T_k})_k$ has Pareto price of anarchy at most $1.283$. In particular, the same conclusion holds for every sequence of exact Bayesian Nash equilibria.
\end{corollary}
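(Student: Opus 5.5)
The plan is to combine the playerwise guarantee of Theorem~\ref{thm:main-guarantee} with the feasible-region bound of Proposition~\ref{prop:1283-benchmark}. The link between them is the fact that an approximate equilibrium cannot leave any player below her security level.

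First, fix $k$ and a player $i$. Theorem~\ref{thm:main-guarantee} gives a behavioral strategy $\sigma_i^*$ with
\[
\E\bigl[U_i^{T_k}(\sigma_i^*,\bar\sigma_{-i})\bigr]\ge T_k\phi(\alpha_i,D_i)-\rho_{D_i}(T_k)
\]
against every joint strategy $\bar\sigma_{-i}$ of the others. In particular, this holds against the product strategy $s^{T_k}_{-i}$. The Bayesian setting causes no difficulty here: the guarantee is ex ante over values and holds for every opposing strategy profile. Since $s^{T_k}$ is a $\delta_{T_k}$-Bayesian Nash equilibrium, deviating to $\sigma_i^*$ gains at most $\delta_{T_k}$. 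Hence
\[
\E\bigl[U_i^{T_k}(s^{T_k})\bigr]\ge T_k\phi(\alpha_i,D_i)-\rho_{D_i}(T_k)-\delta_{T_k}.
\]
Dividing by $T_k$ and using $\rho_{D_i}(T)=o(T)$ and $\delta_{T_k}=o(T_k)$ gives $x_i^{T_k}\ge g_i-o(1)$. Along any subsequence converging to an accumulation point $x$, this yields $x_i\ge g_i=\phi(\alpha_i,D_i)$ for every $i$.

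Second, suppose for contradiction that $x$ has Pareto price of anarchy larger than $1.283$. Then there is $u\in\mathcal U(\alpha_N,D_N)$ with $u_i>1.283\,x_i$ for all $i$. Since $x_i\ge g_i\ge0$, this gives $u_i>1.283\,g_i$ for every $i$, which contradicts Proposition~\ref{prop:1283-benchmark}. This uses only weak monotonicity: $1.283\,x_i\ge1.283\,g_i$ together with the strict inequality $u_i>1.283\,x_i$ gives the strict inequality needed. So coordinates with $g_i=0$ need no special care. Exact Bayesian Nash equilibria are the case $\delta_{T_k}=0$.

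The main thing to check is that the definition of $\delta$-Bayesian Nash equilibrium used here is the ex ante one, with a gain of at most $\delta$ from any unilateral deviation in expected total utility. Under an interim, type-by-type definition, I would integrate over types to reduce it to the ex ante form. A second point is that the accumulation points are taken in the same one-round feasible region $\mathcal U(\alpha_N,D_N)$ in which Proposition~\ref{prop:1283-benchmark} is stated. This holds because the normalization by $T_k$ puts utilities in per-round units. Finiteness of each $g_i$ follows from the assumed integrability.
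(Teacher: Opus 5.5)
Your proof is correct and follows the paper's argument. Deviating to the guaranteeing strategy of Theorem~\ref{thm:main-guarantee} gives $x_i^{T_k}\ge g_i-(\rho_{D_i}(T_k)+\delta_{T_k})/T_k$, hence $x_i\ge g_i$ at every accumulation point, and then any feasible $u$ with $u_i>1.283\,x_i$ for all $i$ contradicts Proposition~\ref{prop:1283-benchmark}.

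The paper also remarks, only for completeness, that each $x^{T_k}$ is itself one-round feasible: pick a uniformly random round, sample the equilibrium history, and apply the equilibrium rule to fresh values. Your ``per-round units'' comment does not establish that. It does not matter, because the definition of Pareto price of anarchy does not require $x$ to be feasible.
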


\begin{proof}
Put $g_i=\phi(\alpha_i,D_i)$. By Theorem~\ref{thm:main-guarantee}, player $i$ has a strategy guaranteeing
\[
T_k g_i-\rho_{D_i}(T_k),
\qquad
\rho_{D_i}(T_k)=o(T_k),
\]
against every strategy of the others. The $\delta_{T_k}$-equilibrium condition therefore gives
\[
x_i^{T_k}
\ge
 g_i-\frac{\rho_{D_i}(T_k)+\delta_{T_k}}{T_k}.
\]
Hence every accumulation point satisfies $x_i\ge g_i$ for all $i$.

For completeness, each normalized equilibrium vector $x^{T_k}$ is itself attainable in the one-round feasible region. Choose a round uniformly from $\{1,\ldots,T_k\}$, sample the equilibrium history before that round, and apply the equilibrium allocation rule to a fresh current value profile. Current values are independent of the sampled history and have distribution $\prod_iD_i$, so averaging over the sampled history and round defines a measurable randomized one-round allocation rule with expected-utility vector $x^{T_k}$.

If some feasible vector $u$ satisfied $u_i>1.283\,x_i$ for every $i$, then $u_i>1.283\,g_i$ for every $i$, contradicting Proposition~\ref{prop:1283-benchmark}. This proves the claim.
\end{proof}

\section{Reduction to the Bidder--Adversary game}\label{sec:bidder-adversary}

Fix one player $i$ whose guarantee we study and call her \Bidder{}. The auxiliary game replaces all remaining players by a single \Adversary{} that controls their combined virtual budget and minimizes the \Bidder{}'s total utility. Write
\[
\alpha=\alpha_i,
\qquad
D=D_i.
\]

\begin{definition}[The Bidder--Adversary game $Z_T(\alpha,D)$]
The game $Z_T(\alpha,D)$ is a $T$-round zero-sum game between \Bidder{} and \Adversary{}. Initially,
\[
B_0^{\BidderTag}=\alpha,
\qquad
B_0^{\AdversaryTag}=1-\alpha.
\]
In each round $t$, \Bidder{} observes a private value $V_t\sim D$, independent of the past. \Adversary{} does not observe $V_t$. They simultaneously submit feasible bids
\[
b_t\in[0,B_{t-1}^{\BidderTag}],
\qquad
 e_t\in[0,B_{t-1}^{\AdversaryTag}].
\]
If $b_t>e_t$, then \Bidder{} wins, gains utility $V_t$, and pays $b_t$. If $e_t\ge b_t$, then \Adversary{} wins and pays $e_t$. Ties are broken against \Bidder{}. After the round, the paid bid and the resulting normalized shares are public. \Bidder{}'s payoff is
\[
U^{\BidderTag}=\sum_{t=1}^T V_t\one_{\{b_t>e_t\}},
\]
and \Adversary{}'s payoff is $-U^{\BidderTag}$.

Strategies are behavioral. At every history, \Bidder{} may randomize after observing $V_t$, while \Adversary{} may randomize independently of $V_t$. For distributions with atoms, \Bidder{} draws an independent $Z_t\sim U[0,1]$ and implements a rank-based bid $b(R_D(V_t,Z_t))$; for atomless distributions the auxiliary randomization can be omitted.
\end{definition}

We use the lower and upper values
\begin{align*}
\lowC_T(\alpha,D)
&=
\sup_{\sigma^{\BidderTag}}\inf_{\sigma^{\AdversaryTag}}
\E_{\sigma^{\BidderTag},\sigma^{\AdversaryTag}}\left[\sum_{t=1}^T V_t\one_{\{b_t>e_t\}}\right],\\
\upC_T(\alpha,D)
&=
\inf_{\sigma^{\AdversaryTag}}\sup_{\sigma^{\BidderTag}}
\E_{\sigma^{\BidderTag},\sigma^{\AdversaryTag}}\left[\sum_{t=1}^T V_t\one_{\{b_t>e_t\}}\right].
\end{align*}
The lower value governs the guarantee, while the upper value identifies the asymptotic leading term of the auxiliary game.

\begin{lemma}[Aggregation reduction]\label{lem:aggregation}
Every behavioral strategy of \Bidder{} in $Z_T(\alpha_i,D_i)$ has an implementation by player $i$ in the original $n$-player game with the following property: against every joint, possibly collusive, strategy of the other players, the focal player's payoff has the same law as the auxiliary payoff against some feasible \Adversary{} strategy. Consequently:
\begin{enumerate}[label=\textup{(\roman*)}, leftmargin=2.2em]
\item any explicit auxiliary strategy that guarantees $G$ transfers to an original-game strategy that guarantees $G$ exactly;
\item for every $\varepsilon>0$, player $i$ has a strategy guaranteeing
\[
\E[U_i^T]\ge \lowC_T(\alpha_i,D_i)-\varepsilon.
\]
\end{enumerate}
\end{lemma}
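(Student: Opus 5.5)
The plan is to simulate the auxiliary game inside the original one, using the normalized state that player $i$ can observe. By homogeneity (Section~\ref{subsec:mechanism}), after each round with positive total money, player $i$ normalizes all budgets so that they sum to $1$. Her normalized budget share is then $\beta_t=B_{i,t}/\sum_jB_{j,t}$. The combined normalized budget of the others is $1-\beta_t$, which is exactly the state of $Z_T$ with \Adversary{} holding the rest.

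Given a behavioral \Bidder{} strategy $\sigma^{\BidderTag}$, player $i$ feeds it the public auxiliary history. At each round this history consists of the sequence of her own wins and losses, her paid bids, and the normalized payments of the others, together with her own private values and randomization. When $\sigma^{\BidderTag}$ prescribes a normalized bid $b_t\in[0,\beta_{t-1}]$, she submits $b_t\cdot\sum_jB_{j,t-1}$. Ties are handled by the footnote's perturbation, or by noting that breaking ties against \Bidder{} in $Z_T$ only weakens her. To make this precise, I would have her shade bids so that ties with her have probability zero, or simply observe that uniform tie-breaking gives her at least the auxiliary outcome.

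The key step is constructing, for any joint strategy $\bar\sigma_{-i}$ of the others, a feasible \Adversary{} strategy producing the same law of outcomes. Define $e_t$ as the normalized maximum opposing bid $\max_{j\ne i}b_{j,t}/\sum_kB_{k,t-1}$. Three checks are needed:
\begin{enumerate}[label=\textup{(\alph*)}, leftmargin=2.2em]
\item \emph{Feasibility.} We have $e_t\le\max_j B_{j,t-1}/\sum_k B_{k,t-1}\le 1-\beta_{t-1}$.
\item \emph{Same winner.} Player $i$ wins exactly when $b_t>e_t$, up to ties.
\item \emph{Same state transition.} If $i$ wins, her budget drops by $b_t$. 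If she loses, the highest opponent pays exactly $e_t$ (this is where first price is essential), so the others' total drops by $e_t$. In both cases the new normalized shares coincide with those of $Z_T$.
\end{enumerate}

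The \Adversary{} strategy is obtained by letting \Adversary{} internally simulate the other players. It samples their private values from $\prod_{j\ne i}D_j$ and runs $\bar\sigma_{-i}$ on a simulated full history, conditioned on the public information. This simulation is independent of $V_t$, since the others never see $V_{i,t}$ and values are independent across players. Formally, I would set \Adversary{}'s behavioral strategy at an auxiliary history to be the conditional law of $e_t$ given that auxiliary history under the coupled original-game process. I would then show by induction on $t$ that the joint law of $(V_{i,s},b_s,e_s)_{s\le t}$ coincides in both games.

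I expect the main obstacle to be measurability and information. In the original game, player $i$ observes more than the auxiliary history, for example the identity of the winner. Also, $\bar\sigma_{-i}$ may condition on richer data. I would resolve this by having \Adversary{} carry the whole simulated history of the others as private randomization, which is legitimate because the \Adversary{}'s randomness may depend arbitrarily on past public data.

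From this, (i) is immediate: the expected payoffs coincide, and the auxiliary guarantee holds against every \Adversary{} strategy. For (ii), pick $\sigma^{\BidderTag}$ within $\varepsilon$ of the supremum defining $\lowC_T$ and apply (i).
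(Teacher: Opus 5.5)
Your proposal is correct and is essentially the paper's proof: you implement auxiliary bids by scaling with the current total budget, and you let \Adversary{} privately simulate the opponents' values, randomizations, raw budgets and histories and submit the normalized maximum opposing bid. The first-price rule then burns exactly this amount from the opposing aggregate, so winners and normalized shares evolve identically, and the bid stays independent of $V_t$. One caution: drop the alternative claim that uniform tie-breaking ``gives her at least the auxiliary outcome''. In the dynamic game, winning a tie changes her budget path and can lower her payoff, so you should rely instead, as the paper does, on the footnote's convention that ties involving the focal player are perturbed away.
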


\begin{proof}
Fix a joint strategy of the players other than $i$, and work in normalized units. Let $S_t>0$ be the total remaining virtual budget immediately before round $t$ in the original game. An auxiliary bid $x$ is implemented by the raw bid $S_t x$, while a raw opposing bid is represented by its ratio to $S_t$. The auxiliary \Adversary{} privately simulates the opponents' individual value draws, private randomizations, private histories, raw remaining budgets, and the scale $S_t$. In round $t$ the simulation produces their simultaneous raw bids, and \Adversary{} submits
\[
e_t=\frac{1}{S_t}\max_{j\ne i}b_{j,t}.
\]
This normalized bid is feasible because the maximum raw opposing bid is at most the opponents' aggregate remaining budget. If player $i$ wins, the opposing aggregate budget is unchanged. If player $i$ loses, an opponent with the maximum bid wins and exactly $S_t e_t$ is burned from the opposing aggregate budget. Consequently the normalized aggregate shares evolve exactly as the two budgets in $Z_T(\alpha_i,D_i)$. The scalar $S_t$ is recursively determined by the public sequence of normalized winning bids and may also be retained in \Adversary{}'s private simulation, so the construction includes opponent strategies that depend on absolute, rather than only normalized, budget levels. If a zero-total history is reached, both games use the fixed degenerate continuation specified in the model.

The simulated opponents' current bids depend only on their information before the simultaneous move. By the independence assumption, this information is independent of player $i$'s current value, so the resulting $e_t$ is a legal \Adversary{} action. The auxiliary private state may retain the simulated individual budgets and histories; it need not be publicly revealed. Conversely, player $i$ can implement any strategy of \Bidder{} in the original game while ignoring the additional public information about the individual opponents.

Therefore, for every strategy of \Bidder{} and every fixed joint opponent strategy, the focal player's bids, wins, and utilities have the same law as against the corresponding auxiliary \Adversary{} strategy. This proves the strategy-transfer statement and part \textup{(i)}. Part \textup{(ii)} follows by choosing an auxiliary strategy whose worst-case payoff is within $\varepsilon$ of the supremum defining $\lowC_T(\alpha_i,D_i)$.
\end{proof}

\begin{remark}[Role of independence]
The reduction uses that player $i$'s current and future values are independent of the other side's information and of past values. With intertemporal correlation, the state would have to include posterior or Markov information; the present one-dimensional recursion would no longer be exact.
\end{remark}

\section{Proof ingredients}\label{sec:ingredients}

\subsection{One-step dynamics and barrier propagation}\label{sec:barrier-propagation}

Normalize total remaining money to $1$ after each round. If \Bidder{}'s current share is $\alpha\in(0,1)$ and $m$ rounds remain, then after \Bidder{} wins with bid $b$ the new share is
\begin{equation}\label{eq:alpha-plus}
\alpha^+(b)=\frac{\alpha-b}{1-b},
\end{equation}
and after \Adversary{} wins with bid $e$ the new share is
\begin{equation}\label{eq:alpha-minus}
\alpha^-(e)=\frac{\alpha}{1-e}.
\end{equation}

The randomized-rank coupling allows every current value to be represented as $Q_D(r)$ with $r\sim U[0,1]$. For a bounded Borel continuation function $g:[0,1]\to\mathbb R$, a Borel rank-bid rule $\beta:[0,1]\to[0,\alpha]$, and a pure feasible \Adversary{} bid $e\in[0,1-\alpha]$, define the one-step operator
\begin{align}\label{eq:one-step-operator}
\mathcal J_g(\alpha;\beta,e)
:=\int_0^1\bigg[&
\one_{\{\beta(r)>e\}}
\Bigl(Q_D(r)+g\bigl(\alpha^+(\beta(r))\bigr)\Bigr)
\nonumber\\
&+\one_{\{\beta(r)\le e\}}
 g\bigl(\alpha^-(e)\bigr)
\bigg]dr.
\end{align}
For a \Adversary{} bid distribution $\lambda$ on $[0,1-\alpha]$, write
\[
\mathcal J_g(\alpha;\beta,\lambda)
=\int \mathcal J_g(\alpha;\beta,e)\,\lambda(de).
\]
The endpoint states are handled by the explicit actions stated below; no normalized transition formula is needed when one side has zero budget.

\begin{lemma}[Barrier propagation]\label{lem:barrier-propagation}
Let $L_m,U_m:[0,1]\to\mathbb R$ be bounded Borel functions with $L_0\le0\le U_0$.

\begin{enumerate}[label=\textup{(\alph*)}, leftmargin=2.2em]
\item Suppose that for every $m\ge1$ there is a jointly Borel map $(\alpha,r)\mapsto\beta_m(\alpha,r)$ on $(0,1)\times[0,1]$, with $\beta_m(\alpha,r)\in[0,\alpha]$, such that for every interior state $\alpha$
\begin{equation}\label{eq:lower-barrier-condition}
\inf_{e\in[0,1-\alpha]}
\mathcal J_{L_{m-1}}(\alpha;\beta_m(\alpha,\cdot),e)
\ge L_m(\alpha),
\end{equation}
and suppose the corresponding one-step inequalities hold for specified endpoint actions. Then the recursively concatenated Markov behavioral policy guarantees $L_m(\alpha)$ from every state. In particular,
\[
\lowC_m(\alpha,D)\ge L_m(\alpha).
\]

\item Suppose that for every $m\ge1$ and interior state $\alpha$ there is a Borel probability kernel $\lambda_m(\alpha,de)$ on feasible \Adversary{} bids such that
\begin{equation}\label{eq:upper-barrier-condition}
\sup_{\beta}
\mathcal J_{U_{m-1}}(\alpha;\beta,\lambda_m(\alpha,\cdot))
\le U_m(\alpha),
\end{equation}
where the supremum is over Borel rank-bid rules $\beta:[0,1]\to[0,\alpha]$, and suppose the analogous endpoint inequalities hold. Then the recursively concatenated \Adversary{} policy holds every \Bidder{} strategy to at most $U_m(\alpha)$. In particular,
\[
\upC_m(\alpha,D)\le U_m(\alpha).
\]
\end{enumerate}
\end{lemma}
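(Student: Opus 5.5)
Both parts go by induction on the number $m$ of remaining rounds, using the standard dynamic-programming verification argument for finite-horizon zero-sum games with a public state. The public state is the normalized share $\alpha$. Before starting, I note two points. First, by the independence assumptions, \Bidder{}'s current value (through its rank $r\sim U[0,1]$) is independent of everything \Adversary{} knows at the moment of bidding. Second, the normalized transitions \eqref{eq:alpha-plus}--\eqref{eq:alpha-minus} depend only on the current share and the winning bid. Hence the continuation game after any history with $m-1$ rounds left and share $\alpha'$ is a copy of $Z_{m-1}(\alpha',D)$, up to the conditioning on past play.

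For part (a), I define the Markov policy $\pi_m$ recursively. With $m$ rounds remaining at interior state $\alpha$, draw $r=R_D(V,Z)$ and bid $\beta_m(\alpha,r)$; at endpoints use the specified actions; then continue with $\pi_{m-1}$ from the new share. Joint Borel measurability of $\beta_m$ makes this a legitimate behavioral strategy.

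The inductive claim is: against every \Adversary{} strategy $\sigma^{\AdversaryTag}$, possibly history-dependent and randomized, $\E[\text{payoff}]\ge L_m(\alpha)$. The base case $m=0$ holds since $L_0\le0$.

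For the induction step, fix $\sigma^{\AdversaryTag}$. Condition on \Adversary{}'s first-round information and randomization, which fixes a bid $e$ independent of $r$. After the first round, the residual \Adversary{} behavior (conditioned on the realized history) is some \Adversary{} strategy in $Z_{m-1}(\alpha',D)$. By the induction hypothesis, the conditional continuation payoff is at least $L_{m-1}(\alpha')$ with $\alpha'=\alpha^\pm$. Since \Bidder{}'s policy is Markov, it does not depend on which history led to $\alpha'$. Summing the current-round payoff and taking the expectation over $r$ gives at least $\mathcal J_{L_{m-1}}(\alpha;\beta_m(\alpha,\cdot),e)\ge L_m(\alpha)$. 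Averaging over $e$ finishes the step. Finally $\lowC_m\ge L_m$ follows, because the sup is at least the value of this particular policy.

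Part (b) is symmetric. \Adversary{} plays $e\sim\lambda_m(\alpha,\cdot)$ independently at each state and then recurses. Fix an arbitrary \Bidder{} strategy and condition on the past. \Bidder{}'s first-round behavior is a randomized map from $(r,\text{private randomization})$ to bids. Conditioning on her private randomization (independent of $r$ and $e$) leaves a Borel rank-bid rule $\beta$, possibly after disintegration. The continuation after each outcome is bounded by $U_{m-1}(\alpha')$ by induction, and here it matters that \Bidder{} may know $r$ and past values: the inductive hypothesis must hold for all \Bidder{} strategies, including those using private past information. This is fine because the Adversary policy is Markov and the future values are independent of the past. So the total is at most $\mathcal J_{U_{m-1}}(\alpha;\beta,\lambda_m(\alpha,\cdot))\le U_m(\alpha)$, and integrating over the randomization gives the bound, hence $\upC_m\le U_m$.

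The main obstacle is the measure-theoretic bookkeeping. I must justify that a general behavioral strategy, after conditioning on the first round, induces a valid strategy in the continuation game to which the inductive hypothesis applies. I must also justify that \Bidder{}'s randomized first-round action can be reduced to a family of Borel rank-bid rules, and that $\mathcal J$ is measurable in the relevant parameters so the integrals over $\lambda$ and the conditioning make sense. I would handle this by stating the inductive hypothesis uniformly over all opponent strategies and all states, so that conditioning causes no loss. For (b) I would take a regular conditional distribution of \Bidder{}'s bid given $r$ and write it as $\beta(r,W)$ with $W\sim U[0,1]$ independent, using a standard randomization lemma.
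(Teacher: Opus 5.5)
Your proposal is correct and follows the paper's proof. In both, the argument is induction on the remaining horizon. In (a) you condition on the \Adversary{}'s pre-move information to obtain a pure bid $e$ independent of the current rank. In (b) you write the \Bidder{}'s bid kernel as a Borel function of the rank and an independent uniform seed, then condition on her private history and that seed to obtain a rank-bid rule. You apply the induction hypothesis to the continuation, stated for strategies with enlarged private information, which is legitimate because future values are independent and the fixed policy is Markov. Then you average over the conditioning.
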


\begin{proof}
Both statements follow by induction on the remaining horizon, conditional on the current public history.

For the lower statement, condition also on \Adversary{}'s private information immediately before the simultaneous move. The current rank is independent of that information. Conditional on any realized pure bid $e$, the induction hypothesis bounds the continuation payoff from below by $L_{m-1}$ at the public next state, so \eqref{eq:lower-barrier-condition} gives the desired bound. Averaging over any private randomization of \Adversary{} preserves it.

For the upper statement, represent the current behavioral bid kernel by a Borel function of the current rank and an independent uniform seed, and condition on \Bidder{}'s private information before the current value is drawn and on that seed. This leaves a Borel rank-bid rule $\beta$. The induction hypothesis bounds the continuation payoff from above by $U_{m-1}$ even though \Bidder{} retains the enlarged private history; future values remain independent. Inequality \eqref{eq:upper-barrier-condition} therefore applies. Averaging over \Bidder{}'s private randomization completes the induction. The endpoint arguments are identical once their actions are specified.
\end{proof}

\begin{remark}[Constructive dynamic induction]\label{rem:no-selection}
All later policies are explicit state-dependent rules. Lemma \ref{lem:barrier-propagation} propagates their one-step bounds directly, without a Bellman equality, measurable selectors, or finite-horizon minimax.
\end{remark}

\subsection{Analytic properties of the fair-floor curve}\label{subsec:analytic}

The proofs use two levels of regularity. Concavity and a quantitative interior curvature bound hold for every bounded distribution; a $C^2$ extension to $\alpha=0$ gives the sharper logarithmic lower barrier.

\begin{lemma}[Concavity of the fair-floor curve]\label{lem:concavity}
If $Q$ is bounded, nonnegative, and nondecreasing, then $\alpha\mapsto\phi(\alpha,D)$ is increasing and concave on $(0,1]$.
\end{lemma}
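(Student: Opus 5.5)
The plan is to reduce the claim to an elementary one-parameter family of functions of $\alpha$. I will write $Q=Q_D$ as a mixture of step functions, integrate against $u^{1/\alpha-1}$ in closed form, and check monotonicity and concavity for each step function separately.

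\emph{Step 1 (layer-cake representation).} Since $Q$ is nondecreasing, nonnegative and bounded on $(0,1)$, put $q_0=Q(0^+)\ge0$. Let $\mu$ be the finite nonnegative Lebesgue--Stieltjes measure on $(0,1)$ with $\mu((0,u])=Q(u^+)-q_0$. Then for Lebesgue-a.e.\ $u$, namely off the countably many jump points,
\[
Q(u)=q_0+\int_{(0,1)}\one_{\{x< u\}}\,\mu(dx).
\]
Boundedness of $Q$ gives $\mu((0,1))\le \|Q\|_\infty<\infty$. The integrand below is nonnegative, so Tonelli lets me interchange the $du$ and $\mu(dx)$ integrals. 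Substituting into \eqref{eq:phi-def} and using $\int_x^1u^{1/\alpha-1}\,du=\alpha\bigl(1-x^{1/\alpha}\bigr)$, I obtain for $\alpha\in(0,1]$
\[
\phi(\alpha,D)=q_0\,\alpha+\int_{(0,1)}h_x(\alpha)\,\mu(dx),
\qquad
h_x(\alpha)=\alpha-\alpha x^{1/\alpha}.
\]

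\emph{Step 2 (the elementary kernel).} Fix $x\in(0,1)$ and write $x=e^{-c}$ with $c>0$, so that $h_x(\alpha)=\alpha-\alpha e^{-c/\alpha}$. With $y=c/\alpha>0$, a direct differentiation gives
\[
h_x'(\alpha)=1-e^{-y}(1+y)>0,
\qquad
h_x''(\alpha)=-\frac{c^2}{\alpha^3}e^{-c/\alpha}<0 .
\]
The first inequality holds because $e^{y}>1+y$ for $y>0$. Hence each $h_x$ is increasing and concave on $(0,1]$, and the term $q_0\alpha$ is linear and nondecreasing.

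\emph{Step 3 (mixing).} Increasing monotonicity and concavity are preserved under nonnegative mixtures. Concretely, for $\alpha<\alpha'$ and $\theta\in[0,1]$, the defining inequalities hold pointwise in $x$ and integrate against $\mu$. All integrals are finite since $0\le h_x\le1$ and $\mu$ is finite. Therefore $\phi(\cdot,D)$ is nondecreasing and concave on $(0,1]$. It is strictly increasing unless $q_0=0$ and $\mu=0$, that is, unless $D\equiv0$.

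The only point needing care is Step 1: the a.e.\ identification of $Q$ with its right-continuous version, and the justification of Fubini. Both are routine here because $Q$ is bounded and monotone and all integrands are nonnegative. The real content is the closed-form kernel $\alpha(1-x^{1/\alpha})$ together with its sign computations.
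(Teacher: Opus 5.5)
Your proof is correct and follows essentially the paper's route: you write $Q$ as a constant plus nonnegative step increments, reduce to the kernel $\alpha(1-s^{1/\alpha})$, and read off monotonicity and concavity from its derivatives, including the same second derivative $-s^{1/\alpha}(\log s)^2/\alpha^3$. The only difference is technical: you represent $Q$ exactly as a Stieltjes mixture of indicators and interchange integrals by Tonelli, whereas the paper approximates $Q$ from below by step functions and passes both properties to the limit by monotone convergence.
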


\begin{proof}
Approximate $Q$ from below by nonnegative nondecreasing step functions. For a step function, a constant component contributes a linear term in $\alpha$, while a step increment beginning at $s\in(0,1)$ contributes
\[
g_s(\alpha)=\int_s^1 u^{1/\alpha-1}\,du=\alpha(1-s^{1/\alpha}).
\]
A direct calculation gives
\[
g_s''(\alpha)=-\frac{s^{1/\alpha}(\log s)^2}{\alpha^3}\le0.
\]
Thus every step approximation gives an increasing concave fair-floor curve. Monotone convergence then passes both monotonicity and concavity to $\phi$.
\end{proof}

\begin{lemma}[Basic analytic bounds for bounded distributions]\label{lem:bounded-analytic}
Let $D$ be bounded and nonnegative, let $Q=Q_D$, and put $M=\|Q\|_\infty$. For $\alpha\in(0,1]$, the function
\[
\phi(\alpha)=\int_0^1 Q(u)u^{1/\alpha-1}\,du
\]
is $C^2$ on $(0,1]$ and satisfies
\begin{align}
0\le \phi(\alpha)&\le M\alpha,\label{eq:bounded-phi-linear}\\
\phi'(\alpha)&=\frac1{\alpha^2}\int_0^1 Q(u)(-\log u)u^{1/\alpha-1}\,du,\label{eq:bounded-phi-prime}\\
|\phi''(\alpha)|&\le \frac{4M}{\alpha}.\label{eq:bounded-phi-second}
\end{align}
Moreover, if $D$ is not identically zero, then
\begin{equation}\label{eq:gamma-bounded}
\gamma_D:=\inf_{\alpha\in(0,1]}\phi'(\alpha)
=\phi'(1)
=\int_0^1 Q(u)(-\log u)\,du
>0.
\end{equation}
\end{lemma}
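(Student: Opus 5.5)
The plan is to work directly with the integral representation $\phi(\alpha)=\int_0^1 Q(u)u^{1/\alpha-1}\,du$ and differentiate under the integral sign in $\alpha$. I will first record the kernel derivatives. Write $k(\alpha,u)=u^{1/\alpha-1}=e^{(1/\alpha-1)\log u}$. Then
\[
\partial_\alpha k=\frac{-\log u}{\alpha^2}\,u^{1/\alpha-1},
\qquad
\partial_\alpha^2 k=\Bigl(\frac{(\log u)^2}{\alpha^4}+\frac{2\log u}{\alpha^3}\Bigr)u^{1/\alpha-1}.
\]
On any compact subinterval $[\alpha_0,1]\subset(0,1]$ these are dominated by $C(\alpha_0)(1+(\log u)^2)$, since $u^{1/\alpha-1}\le1$. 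That majorant is integrable on $(0,1)$, and $Q$ is bounded by $M$. Dominated convergence therefore gives that $\phi$ is $C^2$ on $(0,1]$, with $\phi'$ as in \eqref{eq:bounded-phi-prime}.

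The linear bound \eqref{eq:bounded-phi-linear} is immediate from $0\le Q\le M$ and $\int_0^1u^{1/\alpha-1}du=\alpha$.

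For the second-derivative bound \eqref{eq:bounded-phi-second}, I substitute $s=-\log u$, so that $du=e^{-s}ds$ and $u^{1/\alpha-1}\,du=e^{-s/\alpha}ds$. This gives
\[
|\phi''(\alpha)|\le M\int_0^\infty\Bigl(\frac{s^2}{\alpha^4}+\frac{2s}{\alpha^3}\Bigr)e^{-s/\alpha}\,ds
=M\Bigl(\frac{2\alpha^3}{\alpha^4}+\frac{2\alpha^2}{\alpha^3}\Bigr)=\frac{4M}{\alpha}.
\]
This estimate is the main computation. The only care needed is to bound the two terms separately in absolute value rather than exploit cancellation. The constant $4$ comes out exactly, so no sharper argument is required.

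For \eqref{eq:gamma-bounded}, I use Lemma \ref{lem:concavity}: $\phi$ is concave on $(0,1]$, so $\phi'$ is nonincreasing there, and $\inf_{(0,1]}\phi'=\phi'(1)$. Evaluating \eqref{eq:bounded-phi-prime} at $\alpha=1$ gives $\int_0^1Q(u)(-\log u)\,du$. This is positive when $D\not\equiv0$: then $Q>0$ on a set of positive measure (indeed on an interval $(1-p,1)$ with $p=\Pr(V>0)>0$, since $Q$ is nondecreasing), and $-\log u>0$ on $(0,1)$.

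The one subtle point is that concavity was proved for the curve but the derivative monotonicity needs $C^1$. That is already supplied by the first step, so there is no real obstacle.
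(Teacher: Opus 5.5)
Your proposal is correct and follows the same route as the paper. The paper likewise differentiates under the integral sign. It bounds $|\phi''|\le \frac{2}{\alpha^3}A_1+\frac{1}{\alpha^4}A_2$ using $A_1\le M\alpha^2$ and $A_2\le 2M\alpha^3$, which is exactly what your substitution $s=-\log u$ computes. It then uses concavity from Lemma \ref{lem:concavity} to place the infimum of $\phi'$ at $\alpha=1$.

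The only cosmetic difference is the domination interval. The paper dominates on compact subintervals of $(0,\infty)$, so $\alpha=1$ is an interior point. You work on $[\alpha_0,1]$ and get a one-sided derivative at $1$, which is equally adequate for $C^2$ on $(0,1]$.
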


\begin{proof}
The bound \eqref{eq:bounded-phi-linear} follows from
\[
\int_0^1 u^{1/\alpha-1}\,du=\alpha.
\]
Dominated differentiation on every compact subinterval of $(0,\infty)$ gives the derivative formula and continuity of the resulting derivatives. For the second derivative, write
\[
A_k(\alpha)=\int_0^1 Q(u)(-\log u)^k u^{1/\alpha-1}\,du.
\]
Then
\[
\phi''(\alpha)=-\frac{2}{\alpha^3}A_1(\alpha)+\frac1{\alpha^4}A_2(\alpha).
\]
Since $A_1(\alpha)\le M\alpha^2$ and $A_2(\alpha)\le 2M\alpha^3$, we get \eqref{eq:bounded-phi-second}.

Finally, Lemma \ref{lem:concavity} and differentiability imply that $\phi'$ is nonincreasing on $(0,1]$. Hence its infimum is $\phi'(1)$. Formula \eqref{eq:bounded-phi-prime} at $\alpha=1$ gives the displayed integral, which is strictly positive whenever $D$ is not identically zero.
\end{proof}

For the logarithmic lower bound we use the following regularity condition.

\begin{assumption}[Tame bounded distribution]\label{ass:tame}
The distribution $D$ is nonnegative, bounded, and not identically zero, and the fair-floor function
\[
\phi(\alpha)=\int_0^1 Q_D(u)u^{1/\alpha-1}\,du,
\qquad
\phi(0)=0,
\]
extends to a $C^2$ function on $[0,1]$. We use the associated constants
\begin{equation}\label{eq:tame-bounds}
M=\|Q_D\|_\infty,
\qquad
\gamma=\inf_{\alpha\in(0,1]}\phi'(\alpha),
\qquad
L=\sup_{\alpha\in[0,1]}|\phi''(\alpha)|.
\end{equation}
Lemma \ref{lem:bounded-analytic} shows that $\gamma>0$; compactness gives $L<\infty$. Thus the derivative bounds follow from the $C^2$ extension. Flat quantile intervals are allowed and are handled by the push-forward \Adversary{} distribution in Lemma \ref{lem:static-saddle}.
\end{assumption}

\subsection{The static tangent game}\label{sec:static-tangent}

The tangent game is the first-order local problem associated with the candidate continuation value $m\phi(\alpha)$. Set
\[
\mu_\alpha=(1-\alpha)\phi'(\alpha),
\qquad
b=\frac{B}{m\mu_\alpha},
\qquad
e=\frac{Y}{m\mu_\alpha}.
\]
For bids of order $1/m$, the share transitions satisfy
\[
(m-1)\phi\bigl(\alpha^+(b)\bigr)
=(m-1)\phi(\alpha)-B+O(1/m),
\]
and
\[
(m-1)\phi\bigl(\alpha^-(e)\bigr)
=(m-1)\phi(\alpha)+\frac{\alpha}{1-\alpha}Y+O(1/m).
\]
After subtracting the common continuation term, the one-step payoff is therefore approximately $Q(r)-B$ when \Bidder{} wins and $\alpha Y/(1-\alpha)$ when \Adversary{} wins. The following one-shot game solves this local problem exactly; Lemma~\ref{lem:dynamic-tangent} later makes the approximation error precise.

Let $Q=Q_D$ be bounded, nonnegative, and nondecreasing, put $M=\|Q\|_\infty$, and fix $\alpha\in(0,1)$. Write
\[
a=\frac{1-\alpha}{\alpha},
\qquad
c=\frac{\alpha}{1-\alpha}.
\]
The tangent game is a one-shot zero-sum game. \Bidder{} observes $r\sim U[0,1]$ and value $Q(r)$. \Bidder{} chooses a tangent bid $B\ge0$, and \Adversary{} chooses a tangent bid $Y\ge0$. If $B>Y$, \Bidder{}'s payoff is $Q(r)-B$. If $Y\ge B$, \Bidder{}'s payoff is $cY$.

Define, for $r\in(0,1]$,
\begin{equation}\label{eq:B-alpha}
B_\alpha(r)=
\frac{1-\alpha}{\alpha}
 r^{-1/\alpha}\int_0^r Q(s)s^{1/\alpha-1}\,ds,
\end{equation}
and set $B_\alpha(0)$ equal to the right limit. Then
\begin{equation}\label{eq:B-ode}
rB_\alpha'(r)=
\frac{1-\alpha}{\alpha}Q(r)-\frac1\alpha B_\alpha(r)
\end{equation}
where the derivative exists. Moreover,
\begin{equation}\label{eq:B-bound}
B_\alpha(r)
\le(1-\alpha)Q(r)
\le(1-\alpha)M.
\end{equation}

\begin{lemma}[Static tangent saddle]\label{lem:static-saddle}
Let $Q$ be bounded, nonnegative, and nondecreasing. In the static tangent game, the strategy $r\mapsto B_\alpha(r)$ guarantees \Bidder{} at least $\phi(\alpha,D)$.

Conversely, let $S_\alpha$ have distribution $\Pr(S_\alpha\le s)=s^a$ on $[0,1]$, and let \Adversary{} bid
\begin{equation}\label{eq:opponent-H}
Y=B_\alpha(S_\alpha).
\end{equation}
Then this \Adversary{} distribution holds \Bidder{}'s expected payoff to at most $\phi(\alpha,D)$. Hence the static tangent game has value $\phi(\alpha,D)$. If $B_\alpha$ is strictly increasing, this is equivalently described by $H_\alpha(B_\alpha(r))=r^a$; if $B_\alpha$ has flat intervals, the push-forward distribution has the corresponding atoms.
\end{lemma}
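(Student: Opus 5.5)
The plan is to verify both halves directly, using only the ODE \eqref{eq:B-ode} and the fact that $\Phi(r):=\int_0^r Q(s)s^{1/\alpha-1}\,ds$ is Lipschitz. Since $B_\alpha=a\,r^{-1/\alpha}\Phi(r)$, this makes $B_\alpha$ locally absolutely continuous on $(0,1]$ and bounded by \eqref{eq:B-bound}. It is also nondecreasing: by \eqref{eq:B-bound} and \eqref{eq:B-ode}, $rB_\alpha'=\tfrac1\alpha\bigl((1-\alpha)Q-B_\alpha\bigr)\ge0$. The algebraic identities used throughout are
\[
1+c=\frac1{1-\alpha}=\frac c\alpha,
\qquad
ca=1,
\qquad
c\,B_\alpha(1)=ca\,\Phi(1)=\phi(\alpha,D).
\]

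\emph{Bidder side.} Fix a pure \Adversary{} bid $Y\ge0$ and let $\rho(Y)=\sup\{r:B_\alpha(r)\le Y\}$, with $\rho=0$ if the set is empty. Ties go to \Adversary{}, and $B_\alpha$ is monotone and continuous. Hence \Bidder{} wins exactly for $r>\rho$, and the payoff is
\[
P(\rho,Y)=\int_\rho^1\bigl(Q(r)-B_\alpha(r)\bigr)dr+\rho\,cY.
\]
For $Y=B_\alpha(\rho)$ in the range of $B_\alpha$, define $\Pi(\rho)=\int_\rho^1(Q-B_\alpha)\,dr+c\rho B_\alpha(\rho)$. Differentiating a.e. and substituting $c\rho B_\alpha'=c\bigl(aQ-B_\alpha/\alpha\bigr)=Q-\tfrac{c}{\alpha}B_\alpha$ gives
\[
\Pi'(\rho)=-Q+B_\alpha+cB_\alpha+Q-\tfrac{c}{\alpha}B_\alpha=0.
\]
Absolute continuity then makes $\Pi$ constant, equal to $\Pi(1)=cB_\alpha(1)=\phi(\alpha,D)$. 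I will need to check the limit at $\rho\to0$ using boundedness of $B_\alpha$.

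The remaining bids are covered as follows. If $Y$ lies on a flat piece of $B_\alpha$, then $\rho$ is the right endpoint and the same formula applies. If $Y>B_\alpha(1)$, the payoff is $cY>\phi$. If $Y<B_\alpha(0^+)$, then $\rho=0$ and the payoff is $\Pi(0^+)=\phi$. Hence the payoff is at least $\phi$ for every $Y$, and by averaging also for every mixed $Y$.

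\emph{Adversary side.} Fix \Adversary{}'s law $Y=B_\alpha(S_\alpha)$ with $\Pr(S_\alpha\le s)=s^a$. Any \Bidder{} bid $x$ above $B_\alpha(1)$ is weakly dominated by bidding $B_\alpha(1)$ (or just above it), and the tie/flat bookkeeping reduces any other bid to a threshold $\rho\in[0,1]$ with win probability $\rho^a$. With realized rank $r$, the conditional payoff of threshold $\rho$ is
\[
W(r,\rho)=\rho^a\bigl(Q(r)-B_\alpha(\rho)\bigr)+c\int_\rho^1 B_\alpha(s)\,a s^{a-1}\,ds .
\]
Using the ODE, its $\rho$-derivative simplifies to
\[
\partial_\rho W=a\rho^{a-1}\bigl(Q(r)-Q(\rho)\bigr),
\]
because the $B_\alpha$ terms cancel via $1+c=c/\alpha$. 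Since $Q$ is nondecreasing, this derivative is $\ge0$ for $\rho<r$ and $\le0$ for $\rho>r$, so $\rho=r$ is optimal for every $r$. The resulting value $\int_0^1W(r,r)\,dr$ then equals the common value $\phi(\alpha,D)$: this follows either by direct computation (Fubini on the second term together with $\Phi$) or by noting that, against the Bidder strategy $B_\alpha$, every bid in the support yields exactly $\phi$ by the first part. That the pair is a saddle with value $\phi(\alpha,D)$ follows.

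The main obstacle is the bookkeeping rather than the calculus. It consists of the behaviour of $B_\alpha$ near $r=0$, where $\rho B_\alpha'$ must be controlled to justify the fundamental theorem of calculus and the boundary limit $\Pi(0^+)$, and of the flat intervals and atoms of the push-forward law together with tie-breaking against \Bidder{}. I would handle flat pieces by always taking the right endpoint as the threshold and checking that the payoff formulas are continuous there.
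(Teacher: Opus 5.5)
Your proposal is correct and follows essentially the same route as the paper: the same generalized cutoffs on each side, the ODE cancellation that makes the Bidder's indifference function constant and equal to $cB_\alpha(1)=\phi(\alpha,D)$, the same threshold payoff with derivative $a\rho^{a-1}\bigl(Q(r)-Q(\rho)\bigr)$, and the same Fubini evaluation of $\int_0^1 W(r,r)\,dr$. The only minor difference is at $\rho\to0$, where you use boundedness of $B_\alpha$ instead of the paper's continuity of $B_\alpha$ at $0$; this suffices, since both $\rho B_\alpha(\rho)$ and $\rho^a\bigl(Q(r)-B_\alpha(\rho)\bigr)$ vanish in the limit.
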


\begin{proof}
First record the elementary regularity of $B_\alpha$. On every interval $[\varepsilon,1]$, the primitive
\[
F(r)=\int_0^r Q(s)s^{1/\alpha-1}\,ds
\]
is absolutely continuous, and therefore $B_\alpha(r)=((1-\alpha)/\alpha)r^{-1/\alpha}F(r)$ is absolutely continuous there. The alternative representation
\[
B_\alpha(r)=(1-\alpha)\int_0^1 Q(rt)\frac1\alpha t^{1/\alpha-1}\,dt
\]
shows, by dominated convergence and the existence of the right limit of the monotone function $Q$ at $0$, that $B_\alpha$ is continuous at $0$. On $[\varepsilon,1]$, \eqref{eq:B-ode} and \eqref{eq:B-bound} imply $B_\alpha'\ge0$ almost everywhere, so $B_\alpha$ is nondecreasing on $[\varepsilon,1]$ and hence on $[0,1]$ by letting $\varepsilon\downarrow0$. The functions differentiated below are therefore absolutely continuous on compact subintervals of $(0,1)$, and the endpoint cases follow by continuity.

First consider \Bidder{}'s guarantee. For a \Adversary{} tangent bid $Y\le B_\alpha(1)$ define the upper cutoff
\[
t_+(Y)=\sup\{s\in[0,1]:B_\alpha(s)\le Y\},
\]
with $t_+(Y)=0$ if the set is empty. With ties broken against \Bidder{}, she wins exactly on the ranks $r>t_+(Y)$. If $Y$ belongs to the support interval of $B_\alpha$, continuity gives $B_\alpha(t_+(Y))=Y$; if $Y$ is below the support, then $t_+(Y)=0$ and the product $t_+(Y)Y$ is zero. Thus, in all cases $Y\le B_\alpha(1)$, \Bidder{}'s expected payoff from $B_\alpha(r)$ is at least
\[
\Psi(t)=
\int_t^1(Q(r)-B_\alpha(r))\,dr
+
\frac{\alpha}{1-\alpha}tB_\alpha(t),
\qquad t=t_+(Y).
\]
Using \eqref{eq:B-ode}, $\Psi'(t)=0$ at every differentiability point of $B_\alpha$. Hence $\Psi$ is constant. At $t=1$,
\[
\Psi(1)=\frac{\alpha}{1-\alpha}B_\alpha(1)
=\int_0^1 Q(s)s^{1/\alpha-1}\,ds
=\phi(\alpha).
\]
For $Y>B_\alpha(1)$, \Bidder{} loses for sure and receives the tangent continuation payoff $\frac{\alpha}{1-\alpha}Y\ge\frac{\alpha}{1-\alpha}B_\alpha(1)=\phi(\alpha)$. Hence \Bidder{} guarantees $\phi(\alpha)$.

Now let \Adversary{} use the push-forward bid $Y=B_\alpha(S_\alpha)$, where $S_\alpha$ has density $a s^{a-1}$. Fix a type $r$ of \Bidder{} and an arbitrary tangent bid $x\ge0$. Put
\[
t_-(x)=\sup\{s\in[0,1]: B_\alpha(s)<x\},
\]
with the endpoint conventions $t_-(x)=0$ below the support and $t_-(x)=1$ above the support. Since ties are lost by \Bidder{}, she wins only when $S_\alpha<t_-(x)$. If this winning event has positive probability, continuity gives $x\ge B_\alpha(t_-(x))$; below the support the winning probability is zero. Therefore her payoff from bid $x$ is at most
\[
P(r,t)=
t^a(Q(r)-B_\alpha(t))
+
\int_t^1 \frac{\alpha}{1-\alpha}B_\alpha(s)a s^{a-1}\,ds
\]
with $t=t_-(x)$. Thus it suffices to maximize $P(r,t)$ over $t\in[0,1]$.

At differentiability points of $B_\alpha$, using \eqref{eq:B-ode} gives
\[
\frac{\partial}{\partial t}P(r,t)=a t^{a-1}(Q(r)-Q(t)).
\]
Equivalently, by absolute continuity, for $t<r$,
\[
P(r,r)-P(r,t)=\int_t^r a s^{a-1}(Q(r)-Q(s))\,ds\ge0,
\]
and for $t>r$,
\[
P(r,t)-P(r,r)=\int_r^t a s^{a-1}(Q(r)-Q(s))\,ds\le0.
\]
Hence $P(r,t)$ is maximized at $t=r$, with possible flat intervals of maximizers when $Q$ is flat. Even if a flat segment of $B_\alpha$ prevents a single bid from inducing the cutoff $t=r$, $P(r,r)$ remains an upper bound on every feasible response. Thus the ex ante payoff is at most
\begin{align*}
\int_0^1 P(r,r)\,dr
&=\int_0^1 r^a Q(r)\,dr
 -\int_0^1 r^a B_\alpha(r)\,dr
 +\int_0^1\int_r^1 \frac{\alpha}{1-\alpha}B_\alpha(s)a s^{a-1}\,ds\,dr\\
&=\int_0^1 r^a Q(r)\,dr
 -\int_0^1 r^a B_\alpha(r)\,dr
 +\frac{\alpha}{1-\alpha}a\int_0^1 B_\alpha(s)s^a\,ds\\
&=\int_0^1 Q(r)r^a\,dr
=\phi(\alpha),
\end{align*}
because $a=1/\alpha-1$ and $(\alpha/(1-\alpha))a=1$. Thus \Adversary{} holds \Bidder{} down to $\phi(\alpha)$.
\end{proof}

\begin{lemma}[Normalization identity]\label{lem:mu}
Let
\[
\mu_\alpha=
\int_0^1 B_\alpha(r)\frac1\alpha r^{1/\alpha-1}\,dr.
\]
Then
\begin{equation}\label{eq:mu}
\mu_\alpha=(1-\alpha)\phi'(\alpha).
\end{equation}
Consequently, whenever a constant $\gamma>0$ satisfies $\phi'(\alpha)\ge\gamma$ on the relevant interval,
\begin{equation}\label{eq:scaled-bound}
\frac{B_\alpha(r)}{\mu_\alpha}
\le\frac{M}{\gamma}=:K.
\end{equation}
\end{lemma}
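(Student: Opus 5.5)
The plan is to compute $\mu_\alpha$ directly from the definition \eqref{eq:B-alpha}, exchange the order of integration, and compare the result with the derivative formula \eqref{eq:bounded-phi-prime} from Lemma~\ref{lem:bounded-analytic}. Write $p=1/\alpha$. Substituting \eqref{eq:B-alpha} gives
\[
\mu_\alpha=\frac{1-\alpha}{\alpha}\cdot\frac1\alpha\int_0^1 r^{-p}\,r^{p-1}\int_0^r Q(s)s^{p-1}\,ds\,dr
=\frac{1-\alpha}{\alpha^2}\int_0^1\frac1r\int_0^r Q(s)s^{p-1}\,ds\,dr.
\]
The integrand is nonnegative, so Tonelli's theorem lets us integrate over $r\in(s,1)$ first. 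Since $\int_s^1 dr/r=-\log s$, this yields
\[
\mu_\alpha=\frac{1-\alpha}{\alpha^2}\int_0^1 Q(s)(-\log s)s^{1/\alpha-1}\,ds.
\]
By \eqref{eq:bounded-phi-prime} this equals $(1-\alpha)\phi'(\alpha)$, which is \eqref{eq:mu}. The only care needed is integrability near $r=0$. It holds because $Q$ is bounded by $M$, so the inner integral is at most $M\alpha r^{p}$. The double integral is therefore finite and Tonelli applies without issue.

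For \eqref{eq:scaled-bound}, I will use the bound \eqref{eq:B-bound}, namely $B_\alpha(r)\le(1-\alpha)M$, together with $\mu_\alpha=(1-\alpha)\phi'(\alpha)\ge(1-\alpha)\gamma$. For $\alpha\in(0,1)$ the factor $1-\alpha$ is positive and cancels, giving $B_\alpha(r)/\mu_\alpha\le M/\gamma=K$.

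There is no real obstacle. The one step to watch is the Fubini exchange, where the $1/r$ singularity is offset by the $r^{p}$ decay of the inner integral. Note that the factor $(1-\alpha)$ in $B_\alpha$ is exactly what makes the constant $K$ independent of $\alpha$.
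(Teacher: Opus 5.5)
Your proposal is correct and follows the paper's proof essentially step for step. You substitute \eqref{eq:B-alpha}, use Tonelli to produce the factor $\int_s^1 dr/r=-\log s$, identify the result with \eqref{eq:bounded-phi-prime}, and then combine \eqref{eq:B-bound} with $\mu_\alpha\ge(1-\alpha)\gamma$; your integrability remark is harmless but not needed, since Tonelli applies directly to the nonnegative integrand.
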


\begin{proof}
Using \eqref{eq:B-alpha} and Fubini,
\begin{align*}
\mu_\alpha
&=
\int_0^1
\frac{1-\alpha}{\alpha}
 r^{-1/\alpha}\left(\int_0^r Q(s)s^{1/\alpha-1}\,ds\right)
\frac1\alpha r^{1/\alpha-1}\,dr\\
&=\frac{1-\alpha}{\alpha^2}
\int_0^1 Q(s)s^{1/\alpha-1}\left(\int_s^1\frac{dr}{r}\right)ds\\
&=\frac{1-\alpha}{\alpha^2}
\int_0^1 Q(s)(-\log s)s^{1/\alpha-1}\,ds
=(1-\alpha)\phi'(\alpha).
\end{align*}
The bound \eqref{eq:scaled-bound} follows from \eqref{eq:B-bound}, \eqref{eq:mu}, and the chosen lower bound $\phi'\ge\gamma$ for $\alpha\in(0,1)$.
\end{proof}

\subsection{Dynamic--tangent comparison and policy measurability}\label{subsec:dynamic-tangent}

The next lemma records all one-step errors needed later. Its upper inequalities use only concavity. The logarithmic lower inequalities use the tame $C^2$ bound, while the robust lower inequalities use $|\phi''(\alpha)|\le4M/\alpha$ away from a small-share region.

\begin{lemma}[Dynamic--tangent comparison]\label{lem:dynamic-tangent}
Let $D$ be bounded, nonnegative, and not identically zero. Write $M=\|Q_D\|_\infty$, choose $\gamma>0$ with $\phi'(\alpha)\ge\gamma$ on $(0,1]$, and put $K=M/\gamma$. Fix $\alpha\in(0,1)$ and an integer $m$ such that $K/m\le1/2$. Let
\[
c=\frac{\alpha}{1-\alpha},
\qquad
\mu_\alpha=(1-\alpha)\phi'(\alpha),
\qquad
 e_{\max}=\frac{B_\alpha(1)}{m\mu_\alpha}.
\]
For a bid $b$ by \Bidder{} put $B=m\mu_\alpha b$, and for a bid $e$ by \Adversary{} put $Y=m\mu_\alpha e$.

\smallskip
\noindent
\textup{(i) Upper comparisons.} For every feasible $b$,
\begin{equation}\label{eq:dt-upper-win}
(m-1)\phi(\alpha^+(b))
\le (m-1)\phi(\alpha)-B+\frac{C_{\rm up}}m,
\end{equation}
where
\[
C_{\rm up}=2MK.
\]
For every feasible $e\le e_{\max}$,
\begin{equation}\label{eq:dt-upper-lose}
(m-1)\phi(\alpha^-(e))
\le (m-1)\phi(\alpha)+cY+\frac{C_{\rm up}}m.
\end{equation}

\smallskip
\noindent
\textup{(ii) Tame lower comparisons.} Suppose $|\phi''|\le L$ on $[0,1]$ and $\alpha\ge K/m$, and let $b=B_\alpha(r)/(m\mu_\alpha)$. Then the canonical bid is feasible, and for every feasible $e\le e_{\max}$,
\begin{align}
(m-1)\phi(\alpha^+(b))
&\ge (m-1)\phi(\alpha)-B_\alpha(r)-\frac{C_{\rm reg}}m,
\label{eq:dt-regular-win}\\
(m-1)\phi(\alpha^-(e))
&\ge (m-1)\phi(\alpha)+cY-\frac{C_{\rm reg}}m,
\label{eq:dt-regular-lose}
\end{align}
where
\[
C_{\rm reg}=2MK+2LK^2.
\]
If $e_{\max}\le1-\alpha$, then also
\begin{equation}\label{eq:dt-regular-above}
(m-1)\phi(\alpha^-(e_{\max}))
\ge m\phi(\alpha)-\frac{C_{\rm reg}}m.
\end{equation}

\smallskip
\noindent
\textup{(iii) Bounded lower comparisons.} Without a uniform $C^2$ assumption, suppose in addition that $m\ge16$ and $\alpha\ge K/\sqrt m$, and again let $b=B_\alpha(r)/(m\mu_\alpha)$. Then, for every feasible $e\le e_{\max}$,
\begin{align}
(m-1)\phi(\alpha^+(b))
&\ge (m-1)\phi(\alpha)-B_\alpha(r)-\frac{20MK}{\sqrt m},
\label{eq:dt-bounded-win}\\
(m-1)\phi(\alpha^-(e))
&\ge (m-1)\phi(\alpha)+cY-\frac{20MK}{\sqrt m}.
\label{eq:dt-bounded-lose}
\end{align}
If $e_{\max}\le1-\alpha$, then also
\begin{equation}\label{eq:dt-bounded-above}
(m-1)\phi(\alpha^-(e_{\max}))
\ge m\phi(\alpha)-\frac{20MK}{\sqrt m}.
\end{equation}
\end{lemma}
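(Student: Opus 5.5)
The proof is a second-order Taylor expansion of $\phi$ around $\alpha$, applied separately to the two transitions \eqref{eq:alpha-plus} and \eqref{eq:alpha-minus}. I will first write the exact share increments. For a bid $b$,
\[
\alpha^+(b)-\alpha=-\frac{(1-\alpha)b}{1-b},
\qquad
\alpha^-(e)-\alpha=\frac{\alpha e}{1-e}.
\]
With $B=m\mu_\alpha b$ and $\mu_\alpha=(1-\alpha)\phi'(\alpha)$, the first-order term $(m-1)\phi'(\alpha)(\alpha^\pm-\alpha)$ is then $-B\cdot\frac{m-1}{m}\cdot\frac1{1-b}$ in the winning case. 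In the losing case it is $cY\cdot\frac{m-1}{m}\cdot\frac1{1-e}$, where I use $\alpha\phi'(\alpha)\,m e=cY$. So each inequality splits into two error sources. The first is the \emph{first-order correction} $\frac{m-1}{m(1-b)}-1=\frac{mb-1}{m(1-b)}$. The second is the \emph{Taylor remainder} $\frac{m-1}2\phi''(\xi)(\alpha^\pm-\alpha)^2$.

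For the upper comparisons (i) I will use only concavity (Lemma~\ref{lem:concavity}). Concavity gives $\phi(\alpha^\pm)\le\phi(\alpha)+\phi'(\alpha)(\alpha^\pm-\alpha)$, so the remainder has the favourable sign and only the first-order correction remains.

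In the winning case the correction works in our favour. The term equals $-B(m-1)/(m(1-b))$, and this is $\le -B+B/m$, while $B\le(1-\alpha)\phi'(\alpha)m b$. The cleaner route is to write $-B\frac{m-1}{m(1-b)}\le -B+\frac{B}{m}$ and then bound $B/m$. That bound is the delicate part, because $b$ is arbitrary feasible, so $B$ need not be $O(1)$. For large $b$ I would instead use the crude bound $\phi(\alpha^+)\le\phi(\alpha)$ together with the observation that $\frac{m-1}{m(1-b)}\ge1$ as soon as $b\ge1/m$. In that regime the error is actually $\le0$. When $b<1/m$ we get $B<\mu_\alpha\le M$, so $B/m\le M/m\le C_{\rm up}/m$.

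In the losing case $e\le e_{\max}$. By \eqref{eq:scaled-bound} this gives $me\le K\le m/2$, so $1/(1-e)\le1+2e$. The correction is then at most $cY\cdot 2e\le \alpha\phi'(\alpha)m e\cdot 2e$, which I will bound using $\alpha\phi'(\alpha)\le\phi(\alpha)\le M$ (concavity with $\phi(0)=0$) and $e\le K/m$. This gives $\le 2MK/m$.

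For the tame lower comparisons (ii) I will bound the remainder from below by $-\frac{m-1}2L(\alpha^\pm-\alpha)^2$. The canonical bid is $b=B_\alpha(r)/(m\mu_\alpha)\le K/m\le\alpha$, which settles feasibility. We also have $|\alpha^\pm-\alpha|\le 2K/m$, using $1/(1-b)\le2$. Hence the remainder is at least $-2LK^2/m$. The first-order corrections are signed the other way now and are bounded exactly as in (i) by $2MK/m$, giving $C_{\rm reg}$. Inequality \eqref{eq:dt-regular-above} follows from \eqref{eq:dt-regular-lose} at $Y=B_\alpha(1)$, because $cB_\alpha(1)=\phi(\alpha)$ by the computation $\Psi(1)=\phi(\alpha)$ in Lemma~\ref{lem:static-saddle}.

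For the bounded case (iii) I will replace $L$ by the local bound $|\phi''(\xi)|\le4M/\xi$ from \eqref{eq:bounded-phi-second}, taking $\xi$ between $\alpha$ and $\alpha^\pm$. This needs $\xi\gtrsim\alpha/2$, which holds because the step is at most $2K/m\le\alpha/2$ once $\alpha\ge K/\sqrt m$ and $m\ge16$. The remainder is then at most $\frac{m}{2}\cdot\frac{8M}{\alpha}\cdot\frac{4K^2}{m^2}=16MK^2/(\alpha m)\le16MK/\sqrt m$. Adding the $2MK/m\le MK/\sqrt m$ corrections gives a total below $20MK/\sqrt m$.

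The main obstacle I expect is \eqref{eq:dt-upper-win} for arbitrary, possibly large, feasible $b$. I will handle it through the case split at $b=1/m$ described above. If that split fails to close cleanly, I will use concavity against the chord to $\phi(0)=0$, which gives $\phi(\alpha^+)\le\phi(\alpha)\alpha^+/\alpha$.
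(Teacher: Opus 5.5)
Your route is the paper's: the exact identities $(m-1)\phi'(\alpha)d_b=Bf_b$ and $(m-1)\phi'(\alpha)d_e=cYf_e$ with $f=\frac{m-1}{m(1-\cdot)}$, the concavity tangent line for (i), and a Taylor remainder controlled by $L$ in (ii) or by $|\phi''(\xi)|\le 4M/\xi$ with $\xi\ge\alpha/2$ in (iii). One step, however, does not deliver the stated constant.

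The step is the losing branch of (i). You bound the correction by $cY(f_e-1)\le 2e\,cY$ and then use $cY=\alpha\phi'(\alpha)\,me$, $\alpha\phi'(\alpha)\le M$ and $e\le K/m$. Those ingredients give only $cY\le MK$, so the correction is at most $2MK^2/m$, not $2MK/m$. Since $K\ge1$, and $K$ is arbitrarily large when $\phi'(1)\ll M$ (for instance a rare high value), this does not prove \eqref{eq:dt-upper-lose} with $C_{\rm up}=2MK$. The missing ingredient is one you already use in (ii). The condition $e\le e_{\max}$ means $Y\le B_\alpha(1)$, so $cY\le cB_\alpha(1)=\phi(\alpha)\le M$. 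Then $cY(f_e-1)\le 2Me\le 2MK/m$, which is exactly how the paper closes this case.

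Three smaller points:
\begin{itemize}
\item In the winning branch of (i), the monotone bound $\phi(\alpha^+)\le\phi(\alpha)$ is not what handles $b\ge1/m$. The tangent line already gives $(m-1)\phi(\alpha^+)\le(m-1)\phi(\alpha)-Bf_b\le(m-1)\phi(\alpha)-B$ there, because $f_b\ge1$.
\item For $b<1/m$, your estimate $B(1-f_b)\le B/m<\mu_\alpha/m\le M/m$ uses $\phi'(\alpha)\le M$, which follows from \eqref{eq:bounded-phi-prime}. Fitting it under $C_{\rm up}/m$ also needs $K\ge1$; note $\gamma\le\phi'(1)\le M$, which gives this.
\item Obtaining \eqref{eq:dt-regular-above} by specializing \eqref{eq:dt-regular-lose} to the feasible bid $e=e_{\max}$ is valid. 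It is slightly cleaner than the paper's separate Taylor computation. The same specialization of \eqref{eq:dt-bounded-lose} gives \eqref{eq:dt-bounded-above}, which your proposal omits.
\end{itemize}
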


\begin{proof}
Since $\phi'(\alpha)\le M$ for bounded $D$, every admissible lower bound $\gamma$ satisfies $\gamma\le M$, and hence $K\ge1$. Set
\[
d_b=\frac{(1-\alpha)b}{1-b},
\qquad
d_e=\frac{\alpha e}{1-e},
\qquad
f_b=\frac{m-1}{m(1-b)},
\qquad
f_e=\frac{m-1}{m(1-e)}.
\]
Then $\alpha^+(b)=\alpha-d_b$, $\alpha^-(e)=\alpha+d_e$, and the normalization identity gives the exact relations
\begin{equation}\label{eq:dt-linear-identities}
(m-1)\phi'(\alpha)d_b=Bf_b,
\qquad
(m-1)\phi'(\alpha)d_e=cYf_e.
\end{equation}
Whenever $e\le e_{\max}$, one has $Y\le B_\alpha(1)$ and hence
\begin{equation}\label{eq:dt-static-bounds}
0\le cY\le cB_\alpha(1)=\phi(\alpha)\le M.
\end{equation}
For the canonical \Bidder{} bid, $B=B_\alpha(r)\le M$ and $b\le K/m$.

The elementary coefficient errors are as follows. For an arbitrary feasible $b$, the positive part of $B(1-f_b)$ can occur only when $b<1/m$; using $\mu_\alpha\le M$ gives
\[
[B(1-f_b)]_+
\le \frac{B}{m(1-b)}
\le \frac{Mb}{1-b}
\le \frac{2M}{m}.
\]
For the canonical bid, $b\le K/m\le1/2$, so
\[
[B_\alpha(r)(f_b-1)]_+\le \frac{2MK}{m}.
\]
Similarly, for $e\le e_{\max}\le K/m\le1/2$, \eqref{eq:dt-static-bounds} gives
\[
[cY(f_e-1)]_+\le\frac{2MK}{m},
\qquad
[cY(1-f_e)]_+\le\frac{2M}{m}.
\]
Concavity of $\phi$, together with \eqref{eq:dt-linear-identities}, now proves \eqref{eq:dt-upper-win}--\eqref{eq:dt-upper-lose}.

Suppose next that $|\phi''|\le L$. Since $b,e\le K/m\le1/2$,
\[
d_b\le\frac{2K}{m},
\qquad
d_e\le\frac{2K}{m}.
\]
Taylor's theorem and \eqref{eq:dt-linear-identities} therefore give a curvature loss of at most $2LK^2/m$ on either branch. Combining this with the preceding coefficient bounds proves \eqref{eq:dt-regular-win} and \eqref{eq:dt-regular-lose}. Moreover,
\begin{equation}\label{eq:emax-first-order}
\phi'(\alpha)\alpha e_{\max}=\frac{\phi(\alpha)}m,
\end{equation}
because $B_\alpha(1)=((1-\alpha)/\alpha)\phi(\alpha)$. If $e_{\max}\le1-\alpha$, then $d_{e_{\max}}\le2K/m$, and Taylor's theorem, \eqref{eq:emax-first-order}, and $\phi(\alpha)\le M$ yield
\[
(m-1)\phi(\alpha^-(e_{\max}))
\ge (m-1)\left(\phi(\alpha)+\frac{\phi(\alpha)}m-\frac{2LK^2}{m^2}\right)
\ge m\phi(\alpha)-\frac{M+2LK^2}{m},
\]
which implies \eqref{eq:dt-regular-above}.

Finally assume $m\ge16$ and $\alpha\ge K/\sqrt m$. Since $\alpha\le1$, this also implies $m\ge K^2$. On a winning branch,
\[
d_b\le\frac{2K}{m}\le\frac{\alpha}{2},
\]
so the interpolation interval stays above $\alpha/2$. Lemma \ref{lem:bounded-analytic} bounds the curvature contribution by
\[
\frac{m-1}{2}\frac{8M}{\alpha}d_b^2
\le\frac{16MK^2}{m\alpha}
\le\frac{16MK}{\sqrt m}.
\]
Together with the coefficient error $2MK/m$, this proves \eqref{eq:dt-bounded-win}. On a losing branch the interpolation interval stays above $\alpha$, and $d_e\le2\alpha K/m$, so the curvature contribution is at most
\[
\frac{m-1}{2}\frac{4M}{\alpha}d_e^2
\le\frac{8M\alpha K^2}{m}
\le\frac{8MK}{\sqrt m}.
\]
The coefficient error is at most $2M/m$, proving \eqref{eq:dt-bounded-lose}. The same calculation with $e=e_{\max}$, together with \eqref{eq:emax-first-order}, loses at most $M/m+8MK/\sqrt m$, which is bounded by $20MK/\sqrt m$. This proves \eqref{eq:dt-bounded-above}.
\end{proof}

\begin{remark}[Measurability of the explicit policies]\label{rem:policy-measurability}
For bounded $Q$, the map $(\alpha,r)\mapsto B_\alpha(r)$ is Borel on $(0,1)\times[0,1]$ by its integral formula, and $\alpha\mapsto\mu_\alpha$ is continuous on $(0,1)$ by Lemma \ref{lem:bounded-analytic}. \Adversary{}'s draw can be implemented as
\[
S_\alpha=U^{\alpha/(1-\alpha)},
\qquad U\sim U[0,1].
\]
Thus the state-dependent rules used below define Borel rank-bid rules and probability kernels of the kind required in Lemma \ref{lem:barrier-propagation}. Together with the explicit endpoint and trivial-region actions, their recursive concatenations are well-defined Markov behavioral policies.
\end{remark}

\section{Finite-horizon bounds for bounded distributions}\label{sec:bounded-bounds}

\subsection{Logarithmic lower bound for tame distributions}\label{subsec:lower-log}

Let $H_0=0$ and $H_m=\sum_{k=1}^m k^{-1}$ for $m\ge1$.

\begin{theorem}[Logarithmic lower bound]\label{thm:lower-log}
Under Assumption \ref{ass:tame}, let $K=M/\gamma$. Then for all $T\ge1$ and $\alpha\in[0,1]$,
\[
\lowC_T(\alpha,D)
\ge T\phi(\alpha,D)-A_{\rm reg}H_T,
\qquad
A_{\rm reg}:=2MK+2LK^2.
\]
In particular, the error is $O_D(\log T)$ uniformly in $\alpha$.
\end{theorem}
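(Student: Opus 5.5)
We apply Lemma~\ref{lem:barrier-propagation}(a) with the lower barrier
\[
L_m(\alpha)=m\phi(\alpha)-A_{\rm reg}H_m,\qquad L_0\equiv0.
\]
It suffices to show that, for every $m\ge1$ and every state $\alpha$, some \Bidder{} action satisfies the one-step inequality with a loss of at most $A_{\rm reg}/m$. Since $L_{m-1}$ differs from $(m-1)\phi$ only by a constant, this means
\[
\mathcal J_{(m-1)\phi}(\alpha;\beta,e)\ge m\phi(\alpha)-\frac{A_{\rm reg}}m
\]
for every feasible $e$. The proof splits into three regimes.

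\emph{Endpoints and small-share region.} At $\alpha=0$ we have $\phi(0)=0$, so $L_m\le0$ is trivially guaranteed. At $\alpha=1$ \Bidder{} holds all the money. She bids a tiny positive amount (or all of it in the last round) and wins every round, keeping share $1$, so she obtains $m\E[V]=m\phi(1)$.

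For interior $\alpha<K/m$, \Bidder{} bids $0$ and collects the continuation value. We have $\alpha^-(e)\ge\alpha$, and $\phi$ is increasing by Lemma~\ref{lem:concavity}, so the continuation is at least $(m-1)\phi(\alpha)$. The shortfall is then $\phi(\alpha)\le M\alpha\le MK/m\le A_{\rm reg}/m$.

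This regime also covers $K/m>1/2$: there $m<2K$, and the trivial bound $\phi(\alpha)\le M\le 2MK/m$ again suffices, since $m\phi(\alpha)-(m-1)\phi(\alpha^-)\le\phi(\alpha)$ whenever \Bidder{} loses.

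\emph{Main region ($\alpha\ge K/m$, $K/m\le1/2$).} \Bidder{} uses the canonical bid $b=B_\alpha(r)/(m\mu_\alpha)$. Lemma~\ref{lem:dynamic-tangent}(ii) shows this bid is feasible. Fix a pure adversary bid $e$ and split by its size.

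If $e\le e_{\max}$, put $Y=m\mu_\alpha e$. Inequalities \eqref{eq:dt-regular-win} and \eqref{eq:dt-regular-lose} give
\[
\mathcal J_{(m-1)\phi}(\alpha;\beta,e)\ge(m-1)\phi(\alpha)+\Bigl[\text{tangent payoff of }B_\alpha\text{ against }Y\Bigr]-\frac{C_{\rm reg}}m,
\]
where \Bidder{} wins exactly when $B_\alpha(r)>Y$. By Lemma~\ref{lem:static-saddle}, the bracket is at least $\phi(\alpha)$.

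If $e>e_{\max}$, \Bidder{} loses for sure. Since $\phi$ is monotone and $\alpha^-$ is increasing in $e$, we get $(m-1)\phi(\alpha^-(e))\ge(m-1)\phi(\alpha^-(e_{\max}))$. Feasibility gives $e_{\max}<e\le1-\alpha$, so \eqref{eq:dt-regular-above} applies and yields at least $m\phi(\alpha)-C_{\rm reg}/m$.

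Since $C_{\rm reg}=A_{\rm reg}$, the induction closes in every regime.

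\emph{Main obstacle.} Two points need care. The first is the tie convention in the static saddle: the adversary wins ties in $Z_T$, so the guarantee of Lemma~\ref{lem:static-saddle} must be the tie-against version; the lemma's proof already handles this via $t_+(Y)$. The second is that the threshold between the bid-zero region and the canonical region must be compatible with the hypotheses $\alpha\ge K/m$ and $K/m\le1/2$ of Lemma~\ref{lem:dynamic-tangent}(ii). This holds because $K\ge1$ and the small-share loss $MK/m$ is dominated by $2MK/m$.

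Measurability of the resulting Markov policy follows from Remark~\ref{rem:policy-measurability}. Summing the per-step losses gives $A_{\rm reg}H_T$ uniformly in $\alpha$.
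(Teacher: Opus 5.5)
Your proposal is correct and follows essentially the same route as the paper's proof. Both use barrier propagation with the canonical tangent bid in the active region $\{K/m\le1/2,\ \alpha\ge K/m\}$ and the zero bid elsewhere, and both split on $e\le e_{\max}$ versus $e>e_{\max}$ using Lemma~\ref{lem:dynamic-tangent}(ii) and the static saddle. The only difference is cosmetic: you keep the unclipped barrier $m\phi-A_{\rm reg}H_m$ and check inactive states directly via monotonicity of $\phi$ and $\phi(\alpha)\le\min\{M\alpha,M\}$, whereas the paper clips at $0$ so that inactive states become trivial by nonnegativity.
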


\begin{proof}
Put $R_m=A_{\rm reg}H_m$, define the raw barrier
\[
\ell_m(\alpha)=m\phi(\alpha)-R_m,
\]
and use its nonnegative clipping
\[
L_m(\alpha)=\max\{0,\ell_m(\alpha)\}.
\]
We verify the lower condition of Lemma \ref{lem:barrier-propagation}. At $\alpha=0$, \Bidder{} bids $0$ and obtains $0=L_m(0)$. At $\alpha=1$, she bids one half of the current total budget in every round. She wins, the normalized share stays $1$, and
\[
\E[V]+L_{m-1}(1)
\ge \max\{0,m\E[V]-R_m\}=L_m(1).
\]

For an interior state, define the first-round rule by
\begin{equation}\label{eq:regular-bidder-policy}
\beta_m^{\rm reg}(\alpha,r)=
\begin{cases}
\displaystyle \frac{B_\alpha(r)}{m\mu_\alpha},
& K/m\le1/2\ \text{and}\ \alpha\ge K/m,\\[2mm]
0,&\text{otherwise}.
\end{cases}
\end{equation}
If $K/m>1/2$, then $m<2K$ and
\[
m\phi(\alpha)\le mM<2MK\le A_{\rm reg}\le R_m.
\]
If instead $\alpha<K/m$, then
\[
m\phi(\alpha)\le mM\alpha<MK\le A_{\rm reg}\le R_m.
\]
Thus in every inactive state $L_m(\alpha)=0$; bidding $0$ gives nonnegative stage utility and a nonnegative continuation barrier, so the lower one-step inequality holds.

Consider now an active state. The bid in \eqref{eq:regular-bidder-policy} is feasible. Fix a feasible \Adversary{} bid $e$. If $e\le e_{\max}$, set $Y=m\mu_\alpha e$ and
\[
t=\sup\{s\in[0,1]:B_\alpha(s)\le Y\},
\]
with $t=0$ below the support. Since $L_{m-1}\ge\ell_{m-1}$, the regular comparisons \eqref{eq:dt-regular-win}--\eqref{eq:dt-regular-lose} give
\begin{align*}
\mathcal J_{L_{m-1}}(\alpha;\beta_m^{\rm reg}(\alpha,\cdot),e)
\ge{}&(m-1)\phi(\alpha)-R_{m-1}\\
&+\int_t^1\bigl(Q(r)-B_\alpha(r)\bigr)\,dr
+\frac{\alpha}{1-\alpha}tY
-\frac{A_{\rm reg}}m.
\end{align*}
The static expression is at least $\phi(\alpha)$ by Lemma \ref{lem:static-saddle}. Since $R_m-R_{m-1}=A_{\rm reg}/m$, the right-hand side is $\ell_m(\alpha)$.

If $e>e_{\max}$, this can occur only when $e_{\max}<1-\alpha$. \Bidder{} surely loses the current round; monotonicity of $L_{m-1}$ is not needed, because $L_{m-1}\ge\ell_{m-1}$ and monotonicity of $\phi$ together with \eqref{eq:dt-regular-above} gives
\[
\mathcal J_{L_{m-1}}(\alpha;\beta_m^{\rm reg}(\alpha,\cdot),e)
\ge m\phi(\alpha)-R_{m-1}-\frac{A_{\rm reg}}m
=\ell_m(\alpha).
\]
In either branch the one-step payoff is also nonnegative, because stage utilities and $L_{m-1}$ are nonnegative. Hence it is at least $L_m(\alpha)=\max\{0,\ell_m(\alpha)\}$. Lemma \ref{lem:barrier-propagation} yields an explicit Markov policy guaranteeing $L_T(\alpha)$, and therefore
\[
\lowC_T(\alpha,D)\ge L_T(\alpha)\ge T\phi(\alpha)-A_{\rm reg}H_T.
\]
\end{proof}

\subsection{Logarithmic auxiliary upper bound}\label{subsec:upper-log}

The upper bound identifies the auxiliary value asymptotically.

\begin{theorem}[Logarithmic upper bound for bounded distributions]\label{thm:upper-log}
Let $D$ be any bounded nonnegative valuation distribution. If $D\not\equiv0$, put $M=\|Q_D\|_\infty$, $\gamma=\gamma_D$, and $K=M/\gamma$. Then for all $T\ge1$ and $\alpha\in[0,1]$,
\[
\upC_T(\alpha,D)
\le T\phi(\alpha,D)+A_{\rm up}H_T,
\qquad
A_{\rm up}:=2MK.
\]
For $D\equiv0$ the same statement holds with $A_{\rm up}=0$. In particular, the error is $O_D(\log T)$ uniformly in $\alpha$.
\end{theorem}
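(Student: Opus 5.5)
We will use Lemma~\ref{lem:barrier-propagation}(b) with the upper barrier
\[
U_m(\alpha)=m\phi(\alpha)+A_{\rm up}H_m,
\]
which satisfies $U_0=0$. For $D\equiv0$ every payoff is zero and there is nothing to prove, so assume $D\not\equiv0$. The Adversary kernel will be the push-forward of the static saddle of Lemma~\ref{lem:static-saddle}, rescaled to dynamic units: at an interior state $\alpha$ with $m$ rounds remaining, \Adversary{} draws $S_\alpha=U^{\alpha/(1-\alpha)}$ and bids
\[
e=\frac{B_\alpha(S_\alpha)}{m\mu_\alpha}.
\]
This bid is at most $e_{\max}$ by construction.

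\textbf{Feasibility and the large-bid regime.} The kernel must be feasible, i.e.\ $e\le 1-\alpha$. Since $B_\alpha(1)=\frac{1-\alpha}{\alpha}\phi(\alpha)$ and $\mu_\alpha=(1-\alpha)\phi'(\alpha)$, we have
\[
e_{\max}=\frac{\phi(\alpha)}{m\alpha\phi'(\alpha)}\le\frac1m,
\]
because concavity together with $\phi(0)\le0$ gives $\phi(\alpha)\le\alpha\phi'(\alpha)$. So $e_{\max}\le1-\alpha$ whenever $\alpha\le1-1/m$. In the remaining states, and whenever $K/m>1/2$ (so that Lemma~\ref{lem:dynamic-tangent} does not apply), we use a crude argument instead. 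If $m<2K$, then $U_m\ge A_{\rm up}H_m\ge 2MK\ge mM$, which dominates any possible total utility, so any Adversary action works; bidding $0$ is fine. For $\alpha$ close to $1$, we truncate \Adversary{}'s bid at $\min\{e,1-\alpha\}$. I would check that bidding the whole budget $1-\alpha$ still yields the required inequality, using $\alpha^-(1-\alpha)=1$ and $U_{m-1}(1)=(m-1)\E V+A_{\rm up}H_{m-1}$; otherwise I would use the concavity slack directly. At the endpoints: at $\alpha=0$ \Bidder{} can never win, giving payoff $0\le U_m(0)$. At $\alpha=1$ \Adversary{} has no money, and \Bidder{} gets at most $\E V$ per round, so $m\phi(1)=m\E V$ suffices.

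\textbf{Interior one-step inequality.} Fix an interior state with $K/m\le1/2$ and an arbitrary Borel rank-bid rule $\beta$. Pointwise in $r$ and $e$, write $B=m\mu_\alpha\beta(r)$ and $Y=m\mu_\alpha e$. Inequality \eqref{eq:dt-upper-win} bounds the winning branch by
\[
Q(r)+(m-1)\phi(\alpha)-B+\frac{C_{\rm up}}{m}+A_{\rm up}H_{m-1},
\]
and \eqref{eq:dt-upper-lose}, valid since $e\le e_{\max}$, bounds the losing branch by
\[
(m-1)\phi(\alpha)+cY+\frac{C_{\rm up}}{m}+A_{\rm up}H_{m-1}.
\]
After removing the common terms, what remains is exactly \Bidder{}'s payoff in the static tangent game against the mixed tangent bid $Y=B_\alpha(S_\alpha)$, with ties lost. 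By Lemma~\ref{lem:static-saddle} its expectation is at most $\phi(\alpha)$ for every $\beta$. The total is therefore at most
\[
m\phi(\alpha)+A_{\rm up}\Bigl(H_{m-1}+\frac1m\Bigr)=U_m(\alpha),
\]
using $C_{\rm up}=A_{\rm up}$.

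Lemma~\ref{lem:barrier-propagation}(b), together with Remark~\ref{rem:policy-measurability} for measurability, then gives $\upC_T\le U_T$. The main obstacle is the near-boundary state $\alpha>1-1/m$, where the canonical Adversary bid may be infeasible and Lemma~\ref{lem:dynamic-tangent} (i) must be adapted to the capped bid $1-\alpha$. I would first try to show that capping only helps \Adversary{} there. The argument: the losing transition goes to share $1$, and $U_{m-1}(1)-U_{m-1}(\alpha)$ is at most $(m-1)\phi'(\alpha)(1-\alpha)$ by concavity. This should compare favourably with the tangent term $cY$ evaluated at the uncapped $Y$.
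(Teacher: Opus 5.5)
Your interior step is exactly the paper's: the push-forward kernel $B_\alpha(S_\alpha)/(m\mu_\alpha)$, then Lemma~\ref{lem:dynamic-tangent}(i), then Lemma~\ref{lem:static-saddle}. There is a genuine gap, though. The states where this kernel is infeasible are not handled, and your description of where they lie rests on a reversed inequality.

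For concave $\phi$ with $\phi(0)=0$, the chord slope dominates the tangent slope, so $\phi(\alpha)\ge\alpha\phi'(\alpha)$. Hence $e_{\max}=\phi(\alpha)/(m\alpha\phi'(\alpha))\ge1/m$, not $\le1/m$. What does hold is $e_{\max}\le K/m$, from $B_\alpha(1)\le(1-\alpha)M$ and $\mu_\alpha\ge(1-\alpha)\gamma$. For $D=U[0,1]$ one gets $e_{\max}=(1+\alpha)/m$. This exceeds $1-\alpha$ whenever $1-\frac{2}{m+1}<\alpha\le1-\frac1m$, i.e.\ at states you declared safe. Infeasibility can occur anywhere in the strip $1-\alpha<K/m$, and that strip is precisely the case you leave open.

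The capping plan does not close it as written. Capping at $1-\alpha$ does not only help \Adversary{}: it gives \Bidder{} sure wins at the tangent price $m\mu_\alpha(1-\alpha)<B_\alpha(1)$, so Lemma~\ref{lem:static-saddle} no longer applies verbatim. Also, without slack, your losing-branch concavity comparison $(m-1)\phi'(\alpha)(1-\alpha)\le m\alpha\phi'(\alpha)e$ (for all $e>1-\alpha$) amounts to $m(1-\alpha)\le1$. It therefore covers only the too-small strip you identified. Both effects would have to be charged to the single increment $A_{\rm up}/m$ that Lemma~\ref{lem:dynamic-tangent}(i) already consumes, and that bookkeeping is missing.

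The missing observation is that no tangent play is needed near the boundary. For $1-\alpha<K/m$,
\[
m\bigl(\phi(1)-\phi(\alpha)\bigr)\le mM(1-\alpha)<MK\le A_{\rm up}H_m ,
\]
so the barrier already exceeds the trivial bound $m\E[V]$, just as in your case $m<2K$. The paper encodes this by clipping the barrier to $U_m=\min\{m\E[V],\,m\phi(\alpha)+A_{\rm up}H_m\}$, and it lets \Adversary{} bid $0$ on $\{K/m>1/2\}\cup\{1-\alpha<K/m\}$. Since then $U_{m-1}\le(m-1)\E[V]$, the one-step inequality of Lemma~\ref{lem:barrier-propagation}(b) holds there for every action.

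The clipping is not cosmetic. With your unclipped barrier, ``any \Adversary{} action works'' is true of actual payoffs but is not a valid one-step certificate. An all-in \Adversary{} win moves the state to $1$, where $U_{m-1}(1)=(m-1)\E[V]+A_{\rm up}H_{m-1}$ can exceed $U_m(\alpha)$.

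If you prefer to keep the unclipped barrier, fix \Adversary{}'s bid at $0$ on that set. The share then cannot increase: $\alpha^+(b)\le\alpha$, and a zero bid leaves $\alpha$ unchanged. So the one-step payoff is at most $\E[V]+(m-1)\phi(\alpha)+A_{\rm up}H_{m-1}$. On the strip, $\E[V]-\phi(\alpha)\le M(1-\alpha)<MK/m\le A_{\rm up}/m$; when $m<2K$, $\E[V]-\phi(\alpha)\le M<A_{\rm up}/m$. Take the active set to be $\{K/m\le1/2,\ 1-\alpha\ge K/m\}$, where $e_{\max}\le K/m\le1-\alpha$. Your interior computation then finishes the proof.
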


\begin{proof}
The zero distribution is immediate, so assume $D\not\equiv0$. Put $R_m=A_{\rm up}H_m$ and define
\[
u_m(\alpha)=m\phi(\alpha)+R_m,
\qquad
U_m(\alpha)=\min\{m\E[V],u_m(\alpha)\}.
\]
We verify the upper condition of Lemma \ref{lem:barrier-propagation}. At $\alpha=0$, \Adversary{} bids $0$. The state remains $0$, and
\[
U_{m-1}(0)=\min\{(m-1)\E[V],R_{m-1}\}
\le \min\{m\E[V],R_m\}=U_m(0),
\]
so the endpoint one-step inequality holds (the actual total payoff is in fact $0$). At $\alpha=1$, the universal payoff bound is $m\E[V]=U_m(1)$.

At an interior state, \Adversary{} uses the kernel
\begin{equation}\label{eq:upper-opponent-policy}
\lambda_m^{\rm up}(\alpha)=
\begin{cases}
\mathcal L\!\left(B_\alpha(S_\alpha)/(m\mu_\alpha)\right),
& K/m\le1/2\ \text{and}\ 1-\alpha\ge K/m,\\
\delta_0,&\text{otherwise},
\end{cases}
\end{equation}
where $\Pr(S_\alpha\le s)=s^{(1-\alpha)/\alpha}$.

First consider an inactive state. If $K/m>1/2$, then $mM<2MK\le A_{\rm up}\le R_m$, and hence $u_m(\alpha)\ge m\E[V]$. If $1-\alpha<K/m$, then
\[
m\bigl(\phi(1)-\phi(\alpha)\bigr)
\le mM(1-\alpha)<MK\le R_m,
\]
so again $u_m(\alpha)\ge m\E[V]$. Thus $U_m(\alpha)=m\E[V]$. Against the bid $0$, or indeed against any first-round \Adversary{} action, stage utility is at most $Q(r)$ and $U_{m-1}\le(m-1)\E[V]$; therefore every \Bidder{} rule has one-step payoff at most $m\E[V]=U_m(\alpha)$.

Now consider an active state. The bid in \eqref{eq:upper-opponent-policy} is feasible because it is at most $e_{\max}\le K/m\le1-\alpha$. Since $U_{m-1}\le u_{m-1}$, the upper comparisons \eqref{eq:dt-upper-win}--\eqref{eq:dt-upper-lose} and Lemma \ref{lem:static-saddle} give, for every \Bidder{} rank-bid rule,
\[
\mathcal J_{U_{m-1}}(\alpha;\beta,\lambda_m^{\rm up}(\alpha))
\le m\phi(\alpha)+R_{m-1}+\frac{A_{\rm up}}m
=u_m(\alpha).
\]
The same one-step payoff is at most $m\E[V]$ by the universal stage-payoff bound and $U_{m-1}\le(m-1)\E[V]$. It is therefore at most $U_m(\alpha)$. Lemma \ref{lem:barrier-propagation} yields the explicit recursive \Adversary{} policy and
\[
\upC_T(\alpha,D)\le U_T(\alpha)\le T\phi(\alpha)+A_{\rm up}H_T.
\]
\end{proof}

\begin{corollary}[Auxiliary bounds in the tame case]\label{cor:value}
Under Assumption \ref{ass:tame},
\[
T\phi(\alpha,D)-O_D(\log T)
\le
\lowC_T(\alpha,D)
\le
\upC_T(\alpha,D)
\le
T\phi(\alpha,D)+O_D(\log T),
\]
uniformly in $\alpha\in[0,1]$. If the lower and upper values coincide, and the common value is denoted by $C_T(\alpha,D)$, then
\[
C_T(\alpha,D)=T\phi(\alpha,D)+O_D(\log T).
\]
\end{corollary}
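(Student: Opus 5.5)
This corollary follows by chaining the two logarithmic theorems just proved with the trivial inequality between lower and upper values. No new analysis is needed.

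First I would observe that $\lowC_T(\alpha,D)\le\upC_T(\alpha,D)$ for every $T$ and $\alpha$. This is the general sup–inf versus inf–sup inequality: for any \Bidder{} strategy $\sigma^{\BidderTag}$ and any \Adversary{} strategy $\sigma^{\AdversaryTag}$,
\[
\inf_{\tau}\E_{\sigma^{\BidderTag},\tau}[U^{\BidderTag}]
\le\E_{\sigma^{\BidderTag},\sigma^{\AdversaryTag}}[U^{\BidderTag}]
\le\sup_{\rho}\E_{\rho,\sigma^{\AdversaryTag}}[U^{\BidderTag}].
\]
All expectations are finite because $D$ is bounded and $U^{\BidderTag}\le\sum_t V_t$. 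Taking the supremum over $\sigma^{\BidderTag}$ on the left and the infimum over $\sigma^{\AdversaryTag}$ on the right gives the middle inequality of the corollary.

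Next, Assumption~\ref{ass:tame} includes boundedness, nonnegativity and $D\not\equiv0$. Hence both Theorem~\ref{thm:lower-log} and Theorem~\ref{thm:upper-log} apply, with the constants $M,\gamma,K$ and, for the lower bound, $L$. They give $\lowC_T\ge T\phi-A_{\rm reg}H_T$ and $\upC_T\le T\phi+A_{\rm up}H_T$ for all $T\ge1$ and all $\alpha\in[0,1]$. The constants $A_{\rm reg}$ and $A_{\rm up}$ depend only on $D$. Using $H_T\le1+\log T$, both error terms are $O_D(\log T)$ uniformly in $\alpha$, which yields the displayed sandwich.

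Finally, if $\lowC_T=\upC_T=:C_T$, the two outer bounds give $|C_T(\alpha,D)-T\phi(\alpha,D)|\le\max\{A_{\rm reg},A_{\rm up}\}H_T$, which is the claimed asymptotic. The only point that needs care is making sure the theorems' hypotheses hold. The upper theorem needs only boundedness. The lower theorem uses the $C^2$ extension to get $L<\infty$, and Assumption~\ref{ass:tame} supplies this directly.
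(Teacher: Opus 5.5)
Your proposal is correct and matches the paper's argument. The paper gives no separate proof of this corollary, but it uses the same combination explicitly for Corollary~\ref{cor:bounded-auxiliary}: Theorem~\ref{thm:lower-log} and Theorem~\ref{thm:upper-log} supply the outer bounds with the $D$-dependent constants $A_{\rm reg}$ and $A_{\rm up}$. The elementary weak-duality inequality $\lowC_T\le\upC_T$ supplies the middle one, and $H_T\le 1+\log T$ turns the harmonic errors into $O_D(\log T)$ uniformly in $\alpha$.
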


\begin{corollary}[Fair-floor guarantee in the tame case]\label{cor:regular-guarantee}
If $D_i$ satisfies Assumption \ref{ass:tame}, write
\[
K_i=\frac{M_i}{\gamma_i},
\qquad
A_{{\rm reg},i}=2M_iK_i+2L_iK_i^2.
\]
Then the explicit auxiliary policy \eqref{eq:regular-bidder-policy}, scaled by the current total virtual budget, transfers to the original repeated first-price game and guarantees
\[
\E[U_i^T]\ge T\phi(\alpha_i,D_i)-A_{{\rm reg},i}H_T.
\]
\end{corollary}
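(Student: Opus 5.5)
The corollary follows by combining Theorem~\ref{thm:lower-log} with the aggregation reduction, Lemma~\ref{lem:aggregation}\textup{(i)}. The plan has three steps.

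\textbf{Step 1: the auxiliary guarantee.} Apply Theorem~\ref{thm:lower-log} to $D=D_i$ with share $\alpha=\alpha_i$ and horizon $T$. Its proof does more than bound $\lowC_T$. It shows that the concrete recursively concatenated Markov policy guarantees $L_T(\alpha_i)$ in $Z_T(\alpha_i,D_i)$ against every \Adversary{} strategy, and $L_T(\alpha_i)\ge T\phi(\alpha_i,D_i)-A_{{\rm reg},i}H_T$. This policy consists of the interior rule \eqref{eq:regular-bidder-policy}, the zero bid in inactive states and at $\alpha=0$, and half the budget at $\alpha=1$. The guarantee comes from Lemma~\ref{lem:barrier-propagation}\textup{(a)}, whose one-step inequalities were checked in that proof. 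So we have an explicit auxiliary strategy guaranteeing $G=T\phi(\alpha_i,D_i)-A_{{\rm reg},i}H_T$, and no $\varepsilon$-approximation of a supremum is needed.

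\textbf{Step 2: transfer to the original game.} Invoke Lemma~\ref{lem:aggregation}\textup{(i)}. Player $i$ computes the current normalized share $B_{i,t-1}/S_t$ from public information and draws the randomized rank $r=R_{D_i}(V_{i,t},Z_t)$. She then submits the raw bid $S_t\,\beta^{\rm reg}_m(B_{i,t-1}/S_t,r)$, where $m=T-t+1$ is the remaining horizon. This is exactly the auxiliary policy scaled by the current total virtual budget, as the corollary states. Fix any joint opponent strategy. The lemma produces a feasible \Adversary{} strategy, namely the normalized maximum opposing bid, against which the focal payoff has the same law. The bound of Step~1 then gives $\E[U_i^T]\ge G$.

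\textbf{Step 3: consistency checks.} The only points needing care are bookkeeping in the transfer. First, ties: the auxiliary game breaks ties against \Bidder{}, whereas the original uses uniform tie-breaking. Uniform tie-breaking only weakly helps player $i$ relative to always losing ties. One then has to check that the barrier argument is unaffected. Losing a tie leaves the share at $\alpha^-(e)$ under the auxiliary rule, while winning one is a winning branch that already satisfies the one-step bound. Alternatively, one can follow the footnote's perturbation remark. This tie issue is the step I expect to need the most care. I would handle it by noting that in the one-step operator $\mathcal J$ the two branches each satisfy the barrier inequality pointwise in $r$ and $e$, so any mixture of them does too. Second, degenerate endpoints: when $\alpha_i\in\{0,1\}$ or the total budget hits zero, the endpoint actions from the proof of Theorem~\ref{thm:lower-log} carry over directly under the fixed degenerate continuation of the model. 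With these checks done, the corollary follows.
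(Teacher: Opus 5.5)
Your proposal is correct and follows the paper's own argument. The explicit Markov policy from the proof of Theorem~\ref{thm:lower-log} guarantees $L_T(\alpha_i)\ge T\phi(\alpha_i,D_i)-A_{{\rm reg},i}H_T$ in $Z_T(\alpha_i,D_i)$, and Lemma~\ref{lem:aggregation}\textup{(i)} transfers it, scaled by $S_t$, to the original game.

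One correction to your optional tie check in Step~3. The one-step barrier inequality holds only after integrating over $r$, not branch by branch at each $r$, so that is not why ties are harmless. The correct reason is that on a tie $Y=B_\alpha(r)$, and the winning-branch lower bound $Q(r)-B_\alpha(r)$ from \eqref{eq:dt-regular-win} dominates the losing-branch bound $\frac{\alpha}{1-\alpha}B_\alpha(r)$ from \eqref{eq:dt-regular-lose} by \eqref{eq:B-bound}. Hence any tie-breaking mixture does at least as well as losing the tie.
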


\begin{proof}
Use the strategy-transfer part of Lemma \ref{lem:aggregation} and Theorem \ref{thm:lower-log}.
\end{proof}

\subsection{Square-root lower bound for arbitrary bounded distributions}\label{sec:bounded-sqrt}

The logarithmic lower estimate used the uniform $C^2$ bound in Assumption \ref{ass:tame}. The logarithmic upper estimate did not. For arbitrary bounded nonnegative distributions, the same barrier-propagation argument still gives an $O(\sqrt T)$ lower error. The following theorem gives that result.

Throughout this section, $D$ is bounded, nonnegative, and not identically zero. Let $Q=Q_D$ and let
\[
M=\|Q\|_\infty<\infty.
\]
The case $D\equiv0$ is trivial.

\begin{theorem}[Square-root lower bound for bounded distributions]\label{thm:bounded-sqrt}
Let $D$ be any bounded nonnegative valuation distribution. If $D\not\equiv0$, put $M=\|Q_D\|_\infty$, $\gamma=\gamma_D$, and $K=M/\gamma$. Then for all $T\ge1$ and $\alpha\in[0,1]$,
\[
\lowC_T(\alpha,D)
\ge T\phi(\alpha,D)-A_{\rm low}\sqrt T,
\qquad
A_{\rm low}:=40MK=\frac{40M^2}{\gamma_D}.
\]
For $D\equiv0$ the statement is trivial.
\end{theorem}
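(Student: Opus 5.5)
The plan is to repeat the barrier-propagation argument of Theorem~\ref{thm:lower-log}, using the bounded comparisons of Lemma~\ref{lem:dynamic-tangent}(iii) in place of the tame ones. Those comparisons have one-step error $20MK/\sqrt m$ rather than $A_{\rm reg}/m$.

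\textbf{Barrier.} Set
\[
R_m=20MK\sum_{k=1}^m k^{-1/2},
\qquad
\ell_m(\alpha)=m\phi(\alpha)-R_m,
\qquad
L_m=\max\{0,\ell_m\}.
\]
Since $\sum_{k\le m}k^{-1/2}\le 2\sqrt m$, we get $R_T\le 40MK\sqrt T=A_{\rm low}\sqrt T$. Hence it suffices to verify the lower one-step condition \eqref{eq:lower-barrier-condition} of Lemma~\ref{lem:barrier-propagation} for $L_m$.

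\textbf{Endpoints.} These are handled exactly as in Theorem~\ref{thm:lower-log}. At $\alpha=0$, \Bidder{} bids $0$. At $\alpha=1$, she bids half the remaining budget, so the share stays $1$ and $\E[V]+L_{m-1}(1)\ge L_m(1)$.

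\textbf{Policy at interior states.} \Bidder{} plays
\[
\beta_m(\alpha,r)=\frac{B_\alpha(r)}{m\mu_\alpha}
\qquad\text{when } m\ge16 \text{ and } \alpha\ge K/\sqrt m,
\]
and bids $0$ otherwise. In the active region, $\alpha\le1$ forces $m\ge K^2$. Hence $K/m\le 1/\sqrt m\le1/4$, so the hypotheses of Lemma~\ref{lem:dynamic-tangent}(iii), including $K/m\le1/2$, hold.

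\textbf{Inactive states.} I check that $L_m(\alpha)=0$, so bidding $0$ suffices because stage utility and $L_{m-1}$ are nonnegative. Recall $K\ge1$ and $R_m\ge 20MK\sqrt m$.
\begin{itemize}[leftmargin=2.2em]
\item If $m<16$, then $m\phi(\alpha)\le mM<16M\le 20MK\le R_m$.
\item If $\alpha<K/\sqrt m$, then by \eqref{eq:bounded-phi-linear}, $m\phi(\alpha)\le mM\alpha<MK\sqrt m\le R_m$.
\end{itemize}

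\textbf{Active states, branch $e\le e_{\max}$.} Define the cutoff $t=t_+(Y)$ with $Y=m\mu_\alpha e$, as in the proof of Theorem~\ref{thm:lower-log}. Use $L_{m-1}\ge\ell_{m-1}$ together with \eqref{eq:dt-bounded-win}--\eqref{eq:dt-bounded-lose}. This bounds the one-step payoff below by
\[
(m-1)\phi(\alpha)-R_{m-1}+\Psi(t)-\frac{20MK}{\sqrt m},
\]
where $\Psi$ is the static expression from Lemma~\ref{lem:static-saddle}. That lemma gives $\Psi(t)\ge\phi(\alpha)$. Since $R_m-R_{m-1}=20MK/\sqrt m$, the bound equals $\ell_m(\alpha)$.

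\textbf{Active states, branch $e>e_{\max}$.} This branch can occur only when $e_{\max}<1-\alpha$. \Bidder{} loses the round. Monotonicity of $\phi$ and \eqref{eq:dt-bounded-above} give continuation at least $m\phi(\alpha)-R_{m-1}-20MK/\sqrt m=\ell_m(\alpha)$.

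In both branches the payoff is also nonnegative, so it is at least $L_m(\alpha)$. Lemma~\ref{lem:barrier-propagation} then yields $\lowC_T(\alpha,D)\ge L_T(\alpha)\ge T\phi(\alpha,D)-A_{\rm low}\sqrt T$. Measurability of the policy follows from Remark~\ref{rem:policy-measurability}.

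\textbf{Main obstacle.} The delicate point is the case split into inactive and active regions. The bounded comparisons of Lemma~\ref{lem:dynamic-tangent}(iii) are only available for $\alpha\ge K/\sqrt m$, because the curvature bound $|\phi''|\le4M/\alpha$ blows up at small shares. So one must check that the barrier already vanishes below that threshold. This is exactly why the error increments are of order $1/\sqrt m$: they make $R_m\gtrsim MK\sqrt m$, which dominates $m\phi(\alpha)\le mM\alpha$ there. The remaining steps are routine bookkeeping of the constants.
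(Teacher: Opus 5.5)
Your proposal is correct and follows the paper's proof essentially step for step. It uses the same canonical bid when $m\ge16$ and $\alpha\ge K/\sqrt m$ (zero otherwise), the same check that the barrier vanishes in inactive states, and the same two-branch use of Lemma~\ref{lem:dynamic-tangent}(iii) together with the static saddle. The only difference is cosmetic. You take $R_m=20MK\sum_{k\le m}k^{-1/2}$, so each increment equals the one-step loss exactly, and then bound $R_T\le 40MK\sqrt T$ at the end. The paper instead uses $A\sqrt m$ directly and verifies $A(\sqrt m-\sqrt{m-1})\ge 20MK/\sqrt m$.
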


\begin{proof}
Assume $D\not\equiv0$, put $A=A_{\rm low}$, and define
\[
\ell_m(\alpha)=m\phi(\alpha)-A\sqrt m,
\qquad
L_m(\alpha)=\max\{0,\ell_m(\alpha)\},
\]
with $L_0=0$. As in Theorem \ref{thm:lower-log}, the endpoint actions satisfy the lower barrier inequalities.

For an interior state, define
\begin{equation}\label{eq:sqrt-blue-policy}
\beta_m^{\rm sqrt}(\alpha,r)=
\begin{cases}
\displaystyle \frac{B_\alpha(r)}{m\mu_\alpha},
& m\ge16\ \text{and}\ \alpha\ge K/\sqrt m,\\[2mm]
0,&\text{otherwise}.
\end{cases}
\end{equation}
If $m<16$, then
\[
m\phi(\alpha)\le mM\le4M\sqrt m\le A\sqrt m.
\]
If $\alpha<K/\sqrt m$, then
\[
m\phi(\alpha)\le mM\alpha<MK\sqrt m\le A\sqrt m.
\]
Thus $L_m(\alpha)=0$ in every inactive state, and the zero bid verifies the one-step lower inequality.

In an active state, $m\ge K^2$ and $K/m\le1/\sqrt m\le1/4$, so the canonical bid is feasible and Lemma \ref{lem:dynamic-tangent}\textup{(iii)} applies. If $e\le e_{\max}$, the same generalized-cutoff calculation as in Theorem \ref{thm:lower-log}, now using $L_{m-1}\ge\ell_{m-1}$, gives
\[
\mathcal J_{L_{m-1}}(\alpha;\beta_m^{\rm sqrt}(\alpha,\cdot),e)
\ge m\phi(\alpha)-A\sqrt{m-1}-\frac{20MK}{\sqrt m}.
\]
If $e>e_{\max}$, monotonicity of $\phi$ and \eqref{eq:dt-bounded-above} give the same bound. Finally,
\[
A(\sqrt m-\sqrt{m-1})
=\frac{A}{\sqrt m+\sqrt{m-1}}
\ge\frac{A}{2\sqrt m}
=\frac{20MK}{\sqrt m},
\]
so every \Adversary{} bid gives at least $\ell_m(\alpha)$. The one-step payoff is nonnegative as well, hence at least $L_m(\alpha)$. Lemma \ref{lem:barrier-propagation} now gives an explicit Markov policy guaranteeing $L_T(\alpha)$, and therefore the stated raw lower bound.
\end{proof}

\begin{corollary}[Auxiliary bounds for bounded distributions]\label{cor:bounded-auxiliary}
If $D$ is bounded and nonnegative, then
\[
T\phi(\alpha,D)-O_D(\sqrt T)
\le
\lowC_T(\alpha,D)
\le
\upC_T(\alpha,D)
\le
T\phi(\alpha,D)+O_D(\log T)
\]
uniformly in $\alpha\in[0,1]$.
\end{corollary}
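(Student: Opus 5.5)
This corollary only assembles results already proved, so the plan is to chain them and check that each bound holds uniformly in $\alpha\in[0,1]$. The left inequality is Theorem~\ref{thm:bounded-sqrt}. It gives $\lowC_T(\alpha,D)\ge T\phi(\alpha,D)-A_{\rm low}\sqrt T$ with $A_{\rm low}=40M^2/\gamma_D$. This constant depends only on $D$, which gives the $O_D(\sqrt T)$ term uniformly in $\alpha$. The right inequality is Theorem~\ref{thm:upper-log}, whose constant $A_{\rm up}=2MK$ is likewise independent of $\alpha$. Since $H_T\le 1+\log T$, it yields the $O_D(\log T)$ term. The degenerate case $D\equiv0$ is trivial: all quantities vanish, since $\phi(\alpha,0)=0$ and the payoff is identically zero.

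The only remaining item is the middle inequality $\lowC_T\le\upC_T$. This is the general sup-inf $\le$ inf-sup inequality and needs no minimax theorem. For any \Bidder{} strategy $\sigma^{\BidderTag}$ and any \Adversary{} strategy $\sigma^{\AdversaryTag}$,
\[
\inf_{\tau}\E_{\sigma^{\BidderTag},\tau}[U]\le \E_{\sigma^{\BidderTag},\sigma^{\AdversaryTag}}[U]\le\sup_{\sigma}\E_{\sigma,\sigma^{\AdversaryTag}}[U].
\]
Taking the supremum over $\sigma^{\BidderTag}$ on the left and then the infimum over $\sigma^{\AdversaryTag}$ on the right gives the claim. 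Expectations are well defined and finite because $0\le U\le\sum_t V_t$ and $D$ is bounded.

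No step is a genuine obstacle. The one point to verify is that both theorems cover the endpoint shares $\alpha\in\{0,1\}$. They do: each is stated for all $\alpha\in[0,1]$, with the endpoints handled by the explicit actions in their proofs. Uniformity in $\alpha$ therefore holds across the whole closed interval.
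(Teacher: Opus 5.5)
Your proposal is correct and matches the paper's proof: the lower bound is Theorem~\ref{thm:bounded-sqrt}, the upper bound is Theorem~\ref{thm:upper-log} with $H_T=O(\log T)$, and the middle inequality is the elementary weak duality $\lowC_T\le\upC_T$. In both theorems the constants are independent of $\alpha$, which gives the uniformity.
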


\begin{proof}
Combine Theorems \ref{thm:bounded-sqrt} and \ref{thm:upper-log} with the elementary weak-duality inequality $\lowC_T\le\upC_T$.
\end{proof}

\begin{corollary}[Fair-floor guarantee for bounded distributions]\label{cor:bounded-guarantee}
If $D_i$ is bounded, nonnegative, and not identically zero, put
\[
M_i=\|Q_{D_i}\|_\infty,
\qquad
K_i=\frac{M_i}{\gamma_{D_i}},
\qquad
A_{{\rm low},i}=40M_iK_i.
\]
Then the explicit auxiliary policy \eqref{eq:sqrt-blue-policy}, scaled by the current total virtual budget, transfers to the original repeated first-price game and guarantees
\[
\E[U_i^T]\ge T\phi(\alpha_i,D_i)-A_{{\rm low},i}\sqrt T.
\]
For $D_i\equiv0$, the zero strategy gives the statement with zero error.
\end{corollary}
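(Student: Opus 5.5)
The proof will combine the strategy-transfer part (i) of Lemma~\ref{lem:aggregation} with Theorem~\ref{thm:bounded-sqrt}. This mirrors the proof of Corollary~\ref{cor:regular-guarantee}, with the square-root barrier replacing the logarithmic one.

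\emph{Auxiliary guarantee.} Fix player $i$ and write $\alpha=\alpha_i$, $D=D_i$. The proof of Theorem~\ref{thm:bounded-sqrt} builds an explicit Markov behavioral policy for \Bidder{} in $Z_T(\alpha,D)$. In interior states it uses the rule \eqref{eq:sqrt-blue-policy}. At the endpoints it bids $0$ at share $0$ and half the budget at share $1$. Atoms are handled through the randomized rank $R_D(V_t,Z_t)$. By Lemma~\ref{lem:barrier-propagation}(a), this policy guarantees $L_T(\alpha)\ge T\phi(\alpha,D)-40MK\sqrt T$ against every \Adversary{} strategy. Remark~\ref{rem:policy-measurability} ensures the policy is a well-defined behavioral strategy.

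\emph{Transfer.} Player $i$ implements the policy in the original game as follows. In each round she computes her current normalized share from the public history, draws $Z_t$, evaluates the auxiliary bid, and submits it multiplied by the current total remaining budget $S_t$. Lemma~\ref{lem:aggregation} then applies to any joint, possibly collusive, opponent strategy. It yields a feasible \Adversary{} strategy against which the auxiliary payoff has the same law as $U_i^T$. Hence $\E[U_i^T]\ge L_T(\alpha_i)$, which is the claimed bound with $A_{{\rm low},i}=40M_iK_i=40M_i^2/\gamma_{D_i}$.

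\emph{Remaining checks.} First, a zero-total-money history: by the remark in Section~\ref{subsec:mechanism}, it is handled by the common degenerate continuation in both games. Second, the tie convention: the auxiliary game breaks ties against \Bidder{}, while the original game breaks them uniformly. This only helps player $i$, because her true win set contains the auxiliary one on every path. To make this rigorous, I would follow the footnote's perturbation device or note that the aggregation simulation can resolve ties in the adversary's favor. Third, the case $D_i\equiv0$: the zero strategy trivially guarantees $0=T\phi(\alpha_i,D_i)$.

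The only genuinely delicate step is the tie discrepancy in the second check, and I expect it to be handled by the aggregation lemma's law-equivalence argument as stated.
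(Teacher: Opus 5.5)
Your proposal is correct and matches the paper's proof: the corollary follows by transferring the explicit policy of Theorem~\ref{thm:bounded-sqrt} through the strategy-transfer part of Lemma~\ref{lem:aggregation}, and your endpoint, measurability, and $D_i\equiv0$ checks are consistent with that. One caution on your tie remark, which goes beyond what the paper writes (the paper relies on its footnote): ``win-set containment on every path'' is not a valid reason, and the aggregation simulation cannot choose how the original game breaks ties, because winning a tie changes the budget state and the paths then diverge; the sound justification is a one-step comparison, since on the tie set $Y=B_\alpha(r)$ the winning-branch bound from \eqref{eq:dt-bounded-win} dominates the losing-branch bound from \eqref{eq:dt-bounded-lose} (their difference is $Q(r)-B_\alpha(r)/(1-\alpha)\ge0$ by \eqref{eq:B-bound}), so the barrier inequality holds for any tie-breaking probability.
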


\begin{proof}
Use the strategy-transfer part of Lemma \ref{lem:aggregation} and Theorem \ref{thm:bounded-sqrt}.
\end{proof}

\subsection{Finite-support distributions}\label{sec:finite}

For finite-support distributions, $Q$ and $B_\alpha$ may have flat intervals. Lemma \ref{lem:static-saddle} handles them by defining \Adversary{}'s static bid as the push-forward $Y=B_\alpha(S_\alpha)$. The generalized-cutoff proof treats the resulting atoms and ties directly.

\begin{corollary}[Finite-support distributions]\label{cor:finite-log}
If $D$ has finite nonnegative support and is not identically zero, then $D$ satisfies Assumption \ref{ass:tame}. Hence
\[
T\phi(\alpha,D)-O_D(\log T)
\le
\lowC_T(\alpha,D)
\le
\upC_T(\alpha,D)
\le
T\phi(\alpha,D)+O_D(\log T)
\]
uniformly in $\alpha\in[0,1]$.
\end{corollary}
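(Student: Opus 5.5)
The plan is to verify Assumption~\ref{ass:tame} directly from an explicit formula for $\phi$, and then read off the displayed two-sided bound from Theorem~\ref{thm:lower-log}, Theorem~\ref{thm:upper-log} and weak duality.

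\textbf{Step 1: explicit formula for $\phi$.} Let the support be $0\le v_1<\dots<v_k$ with cumulative probabilities $0=c_0<c_1<\dots<c_k=1$. Then $Q_D(u)=v_j$ on $(c_{j-1},c_j]$. The distribution is nonnegative and bounded with $M=v_k$, and it is not identically zero by hypothesis. Integrating each step with $\int_s^t u^{1/\alpha-1}\,du=\alpha(t^{1/\alpha}-s^{1/\alpha})$ gives, for $\alpha\in(0,1]$,
\[
\phi(\alpha)=\sum_{j=1}^k v_j\,\alpha\bigl(c_j^{1/\alpha}-c_{j-1}^{1/\alpha}\bigr)
= v_k\alpha-\sum_{j=1}^{k-1}(v_{j+1}-v_j)\,h_{c_j}(\alpha),
\qquad h_c(\alpha):=\alpha c^{1/\alpha}.
\]
The $c_0=0$ term vanishes identically. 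It therefore suffices to show that each $h_c$ with $c\in(0,1)$ extends to a $C^2$ function on $[0,1]$ with $h_c(0)=0$.

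\textbf{Step 2: smoothness of $h_c$ at $0$.} Write $\lambda=-\log c>0$, so that $h_c(\alpha)=\alpha e^{-\lambda/\alpha}$. On $(0,1]$ a direct computation gives
\[
h_c'(\alpha)=e^{-\lambda/\alpha}\Bigl(1+\frac{\lambda}{\alpha}\Bigr),
\qquad
h_c''(\alpha)=\frac{\lambda^2}{\alpha^3}e^{-\lambda/\alpha}.
\]
Each of $h_c$, $h_c'$ and $h_c''$ tends to $0$ as $\alpha\downarrow0$, because $e^{-\lambda/\alpha}$ beats every power of $1/\alpha$. Set $h_c(0)=h_c'(0)=h_c''(0)=0$. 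The mean value theorem, applied to $h_c$ and then to $h_c'$, shows that these are the genuine one-sided derivatives at $0$; for instance, $h_c(\alpha)/\alpha=e^{-\lambda/\alpha}\to0$. Hence $h_c\in C^2[0,1]$, and in fact it is $C^\infty$. Consequently $\phi$, with $\phi(0)=0$, is a finite linear combination of $C^2$ functions on $[0,1]$. This is exactly Assumption~\ref{ass:tame}, and the constants $\gamma>0$ and $L<\infty$ come from Lemma~\ref{lem:bounded-analytic} and compactness, as noted after that assumption.

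\textbf{Step 3: conclusion.} Theorem~\ref{thm:lower-log} gives $\lowC_T\ge T\phi-A_{\rm reg}H_T$, and Theorem~\ref{thm:upper-log} gives $\upC_T\le T\phi+A_{\rm up}H_T$. Both hold uniformly in $\alpha\in[0,1]$. Weak duality gives $\lowC_T\le\upC_T$, and $H_T=O(\log T)$, so the displayed chain follows.

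The flat intervals of $Q$ and $B_\alpha$ need no extra care here. They are already handled inside Lemma~\ref{lem:static-saddle} through the push-forward \Adversary{} bid. The only step with real content is the $C^2$ extension at $\alpha=0$ in Step 2. That step is routine once $\phi$ is written as the sum of the $h_c$.
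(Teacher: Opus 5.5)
Your proposal is correct and follows essentially the same route as the paper. Both arguments write $\phi$ for the step quantile as a finite combination of terms $\alpha c^{1/\alpha}$ plus a linear term, and check that each term extends $C^2$ to $\alpha=0$ because $e^{-\lambda/\alpha}$ kills every power of $1/\alpha$. Both then take $\gamma>0$ from Lemma~\ref{lem:bounded-analytic} and conclude via Theorems~\ref{thm:lower-log} and \ref{thm:upper-log} together with weak duality, which is exactly Corollary~\ref{cor:value}; your summation by parts and explicit derivatives only fill in what the paper summarizes as ``all singular-looking derivatives vanish at $0$.''
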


\begin{proof}
The quantile function is a bounded nondecreasing step function. Therefore $\phi$ is a finite linear combination of terms
\[
\alpha\bigl(s_2^{1/\alpha}-s_1^{1/\alpha}\bigr),
\qquad 0\le s_1<s_2\le1,
\]
with nonnegative coefficients. Each term extends to a $C^2$ function on $[0,1]$: if $s\in(0,1)$, then $\alpha s^{1/\alpha}=\alpha e^{(\log s)/\alpha}$ and all singular-looking derivatives vanish at $0$; the endpoint $s=1$ contributes only the linear term $\alpha$. Thus $\phi$ is $C^2$ with bounded second derivative. Since $D$ is not identically zero, Lemma \ref{lem:bounded-analytic} gives $\gamma_D=\phi'(1)>0$. Assumption \ref{ass:tame} follows, and Corollary \ref{cor:value} gives the estimate.
\end{proof}

\section{Unbounded distributions: truncation and moment rates}\label{sec:truncation}

Truncation extends the asymptotic result to all integrable nonnegative distributions. The lower estimate uses a truncated-value \Bidder{} strategy; the upper estimate uses a truncated-value \Adversary{} strategy and charges the tail as an additive loss.

\begin{corollary}[Quantitative truncation and integrable asymptotic bounds]\label{cor:integrable-guarantee}
Let $D$ be any integrable nonnegative valuation distribution. For $M\in[0,\infty)$, let $D^{(M)}$ be the law of $V\wedge M$ and write
\[
\tau(M)=\E[(V-M)_+].
\]
Let $A_M^{\rm low}$ and $A_M^{\rm up}$ be the constants furnished by Theorems \ref{thm:bounded-sqrt} and \ref{thm:upper-log}, respectively, when applied to $D^{(M)}$. Then, uniformly in $\alpha\in[0,1]$,
\begin{align}
\lowC_T(\alpha,D)&\ge T\phi(\alpha,D)-T\tau(M)-A_M^{\rm low}\sqrt T,\label{eq:quant-trunc-lower}\\
\upC_T(\alpha,D)&\le T\phi(\alpha,D)+T\tau(M)+A_M^{\rm up}H_T.\label{eq:quant-trunc-upper}
\end{align}
In particular,
\[
\lowC_T(\alpha,D)\ge T\phi(\alpha,D)-o_D(T),
\qquad
\upC_T(\alpha,D)\le T\phi(\alpha,D)+o_D(T).
\]
Moreover, for every fixed truncation level $M$, the explicit bounded-law policy for $D_i^{(M)}$, applied to $V_i\wedge M$ and scaled by the current total virtual budget, transfers to the original $n$-player game and gives player $i$ the quantitative guarantee
\[
\E[U_i^T]
\ge T\phi(\alpha_i,D_i)-T\tau_i(M)-A_{i,M}^{\rm low}\sqrt T,
\]
where $\tau_i(M)=\E[(V_i-M)_+]$ and $A_{i,M}^{\rm low}$ is the bounded-law constant for $D_i^{(M)}$.
Consequently, if every $D_i$ is integrable and nonnegative, a diagonal choice of truncation level gives each player the fair-floor guarantee
\[
\E[U_i^T]\ge T\phi(\alpha_i,D_i)-o_{D_i}(T).
\]
\end{corollary}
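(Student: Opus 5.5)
The plan is to compare the game for $D$ with the game for the truncated law $D^{(M)}$, using the monotonicity and additivity of everything in the value variable. Two facts drive everything. First, by monotonicity of the quantile, $Q_{D^{(M)}}=Q_D\wedge M$. Since $u^{1/\alpha-1}\le1$ on $[0,1]$, this gives
\[
0\le\phi(\alpha,D)-\phi(\alpha,D^{(M)})=\int_0^1(Q_D(u)-M)_+u^{1/\alpha-1}\,du\le\int_0^1(Q_D(u)-M)_+\,du=\tau(M),
\]
uniformly in $\alpha\in[0,1]$; the case $\alpha=0$ is trivial. Second, the realized payoffs satisfy
\[
\sum_t (V_t\wedge M)\one_{\{b_t>e_t\}}\le\sum_t V_t\one_{\{b_t>e_t\}}\le\sum_t (V_t\wedge M)\one_{\{b_t>e_t\}}+\sum_t(V_t-M)_+ .
\]
The expectation of the last sum is $T\tau(M)$, whatever the strategies.

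For the lower bound \eqref{eq:quant-trunc-lower}, let \Bidder{} in $Z_T(\alpha,D)$ observe $V_t$, compute $V_t\wedge M$, and run the explicit policy of Theorem~\ref{thm:bounded-sqrt} for $D^{(M)}$. For atoms, she uses the randomized rank of $V_t\wedge M$ under $D^{(M)}$ with a fresh seed. The truncated values are i.i.d.\ with law $D^{(M)}$ and independent of \Adversary{}'s information. Hence every \Adversary{} strategy in the $D$-game is also an \Adversary{} strategy in the $D^{(M)}$-game, and the laws of bids and wins coincide. Combining the left inequality above with the theorem gives
\[
\E[U]\ge T\phi(\alpha,D^{(M)})-A_M^{\rm low}\sqrt T\ge T\phi(\alpha,D)-T\tau(M)-A_M^{\rm low}\sqrt T .
\]
If $D^{(M)}\equiv0$, the bound holds trivially.

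For the upper bound \eqref{eq:quant-trunc-upper}, let \Adversary{} use the explicit policy of Theorem~\ref{thm:upper-log} for $D^{(M)}$, which depends only on public states. An arbitrary \Bidder{} strategy in the $D$-game induces a (randomized) \Bidder{} strategy in the $D^{(M)}$-game, and this induction is the one delicate measurability point. Given a truncated value $w<M$, she knows $V=w$. Given $w=M$, she samples $V$ from the conditional law of $V$ given $V\ge M$ using private randomization and then proceeds. The joint law of the whole play is preserved, so the truncated part of the payoff is at most $T\phi(\alpha,D^{(M)})+A_M^{\rm up}H_T\le T\phi(\alpha,D)+A_M^{\rm up}H_T$. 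The tail term adds at most $T\tau(M)$, by the right inequality above.

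For the $o(T)$ statements, integrability gives $\tau(M)\to0$. Take $M_T\to\infty$ slowly enough that $A_{M_T}^{\rm low}\sqrt T=o(T)$ and $A_{M_T}^{\rm up}H_T=o(T)$; such a sequence exists because for each fixed $M$ the constants are finite. The transfer to the $n$-player game for a fixed $M$ follows from Lemma~\ref{lem:aggregation}(i) applied to the truncated-value \Bidder{} policy above. That policy is a legitimate \Bidder{} strategy in $Z_T(\alpha_i,D_i)$, since it depends only on $V_{i,t}$ and private seeds. The diagonal choice $M=M_T$ then gives $\rho_{D_i}(T)=T\tau_i(M_T)+A_{i,M_T}^{\rm low}\sqrt T=o(T)$.

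The main obstacle is making the diagonal choice independent of $\alpha$. Here we need $A_M^{\rm low}=40M_{(M)}^2/\gamma_{D^{(M)}}$ to be finite and independent of $\alpha$, with $\gamma_{D^{(M)}}$ bounded below for large $M$. Indeed $\gamma_{D^{(M)}}$ is nondecreasing in $M$, so it is at least $\gamma_{D^{(M_0)}}>0$ once $D^{(M_0)}\not\equiv0$.
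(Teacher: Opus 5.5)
Your proposal is correct and follows essentially the same route as the paper: the truncated-value \Bidder{} policy for the lower bound, and the explicit truncated-law \Adversary{} policy with conditional resampling of $V$ given $V\ge M$ for the upper bound. You also use the same estimate $0\le\phi(\alpha,D)-\phi(\alpha,D^{(M)})\le\tau(M)$ and the same diagonal choice of truncation level. One remark: your final paragraph is not a real obstacle, since each $A_M^{\rm low}$ and $A_M^{\rm up}$ is already finite and independent of $\alpha$, so the lower bound on $\gamma_{D^{(M)}}$ is needed only for the quantitative rates of Lemma~\ref{lem:trunc-constant-growth}, not for the $o(T)$ diagonal argument.
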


\begin{proof}
For $M\in[0,\infty)$, let $D^{(M)}$ be the law of $W=V\wedge M$, and write $\phi_M(\alpha)=\phi(\alpha,D^{(M)})$. A \Bidder{} strategy for $D^{(M)}$ can be used in the original game by replacing the observed value $V$ with $W$ when computing the bid. Whenever the strategy wins, the actual payoff $V$ is at least the truncated payoff $W$. Hence
\[
\lowC_T(\alpha,D)\ge \lowC_T(\alpha,D^{(M)}).
\]
By the choice of $A_M^{\rm low}$,
\[
\lowC_T(\alpha,D^{(M)})\ge T\phi_M(\alpha)-A_M^{\rm low}\sqrt T
\]
uniformly in $\alpha$.

For the upper estimate, use in the original game the explicit recursive \Adversary{} policy furnished by Theorem \ref{thm:upper-log} for the truncated law $D^{(M)}$, with the same public history of bids and normalized shares. Fix any \Bidder{} strategy in the original game. Since
\[
V\one_{\{\Bidder{}\text{ wins}\}}
\le
W\one_{\{\Bidder{}\text{ wins}\}}+(V-M)_+,
\]
the total expected payoff is at most the expected truncated payoff plus $T\E[(V-M)_+]$.

A player in the truncated game can reproduce the informational advantage of observing $V$ as follows. Whenever $W<M$, set $\widetilde V=W$. Whenever $W=M$, privately draw $\widetilde V$ from the conditional law of $V$ given $V\ge M$; if this event has probability zero, no auxiliary draw is needed. Retain all such draws as part of the private history and feed the sequence $(\widetilde V_t)$ into the original \Bidder{} strategy. Conditional on the truncated observations, these auxiliary draws have exactly the conditional law of the original values, independently across rounds. Hence, against the fixed \Adversary{} policy, the induced bids and public states have the same joint law as in the original game, while the stage payoff being evaluated is $W$. Such auxiliary sampling is admissible private behavioral randomization in the truncated game. The explicit upper policy therefore bounds the expected truncated payoff by $T\phi_M(\alpha)+A_M^{\rm up}H_T$. Taking the supremum over original-game \Bidder{} strategies gives
\[
\upC_T(\alpha,D)
\le T\phi_M(\alpha)+A_M^{\rm up}H_T+T\E[(V-M)_+].
\]

Finally, for $\alpha\in(0,1]$,
\[
0\le \phi(\alpha,D)-\phi_M(\alpha)
\le \int_0^1 \bigl(Q_D(u)-Q_{D^{(M)}}(u)\bigr)\,du
=\E[(V-M)_+],
\]
because $u^{1/\alpha-1}\le1$; the same bound is trivial at $\alpha=0$. Combining these inequalities proves \eqref{eq:quant-trunc-lower} and \eqref{eq:quant-trunc-upper}. Since $D$ is integrable, the tail expectation tends to $0$ as $M\to\infty$.

Choose $M_k\uparrow\infty$ so that
\[
\varepsilon_k:=\E[(V-M_k)_+]+\frac1k\downarrow0
\]
after passing to a subsequence if necessary. Choose increasing thresholds $T_k$ so large that, for every $T\ge T_k$,
\[
\frac{A_{M_k}^{\rm low}}{\sqrt{T}}\le\varepsilon_k,
\qquad
\frac{A_{M_k}^{\rm up}H_T}{T}\le\varepsilon_k.
\]
For $T_k\le T<T_{k+1}$, use the truncation level $M_k$. The two preceding estimates give
\[
\frac{T\phi(\alpha,D)-\lowC_T(\alpha,D)}{T}\le 2\varepsilon_k,
\qquad
\frac{\upC_T(\alpha,D)-T\phi(\alpha,D)}{T}\le 2\varepsilon_k,
\]
uniformly in $\alpha$, and these bounds tend to $0$. The quantitative original-game statement and its diagonal $o(T)$ consequence follow by transferring the explicit truncated-value policies through Lemma \ref{lem:aggregation}.
\end{proof}

\begin{lemma}[Explicit quadratic dependence under truncation]\label{lem:trunc-constant-growth}
Let $D$ be a nonnegative distribution that is not identically zero, and let $D^{(M)}$ be the law of $V\wedge M$. There are $M_0\ge1$ and $\gamma_0>0$, depending only on $D$, such that for every $M\ge M_0$ the constants in Theorems \ref{thm:upper-log} and \ref{thm:bounded-sqrt} may be chosen as
\[
A_M^{\rm up}=2MK_M,
\qquad
A_M^{\rm low}=40MK_M,
\qquad
K_M:=\frac{M}{\gamma_{D^{(M)}}},
\]
and satisfy, with $C_D:=42/\gamma_0$,
\begin{equation}\label{eq:trunc-constant-explicit}
A_M^{\rm up}+A_M^{\rm low}
\le C_D M^2.
\end{equation}
\end{lemma}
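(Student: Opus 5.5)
The bound comes down to one uniform lower bound on $\gamma_{D^{(M)}}=\phi_M'(1)$, after which everything is arithmetic. The constants $A_M^{\rm up}=2MK_M$ and $A_M^{\rm low}=40MK_M$ are exactly those furnished by Theorems~\ref{thm:upper-log} and \ref{thm:bounded-sqrt}, applied to the truncated law $D^{(M)}$. For these theorems to apply, $D^{(M)}$ must be bounded and not identically zero. Boundedness is automatic, since $\|Q_{D^{(M)}}\|_\infty\le M$. Nonvanishing holds for all $M>0$ because $D\not\equiv0$. By monotonicity of the constants in $M$ we may replace $\|Q_{D^{(M)}}\|_\infty$ by $M$: Lemma~\ref{lem:dynamic-tangent} and the two theorems only use $M$ as an upper bound for the quantile, so the larger value $M$ is an admissible choice. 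This gives
\[
A_M^{\rm up}+A_M^{\rm low}=42MK_M=\frac{42M^2}{\gamma_{D^{(M)}}}.
\]
It therefore suffices to find $M_0\ge1$ and $\gamma_0>0$ with $\gamma_{D^{(M)}}\ge\gamma_0$ for all $M\ge M_0$, and then set $C_D=42/\gamma_0$.

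For the lower bound on $\gamma_{D^{(M)}}$ I would use formula \eqref{eq:gamma-bounded} of Lemma~\ref{lem:bounded-analytic}:
\[
\gamma_{D^{(M)}}=\int_0^1 Q_{D^{(M)}}(u)(-\log u)\,du,
\qquad
Q_{D^{(M)}}=\min\{Q_D,M\}.
\]
The integrand is nondecreasing in $M$, so $\gamma_{D^{(M)}}$ is nondecreasing in $M$. Fix $M_0=\max\{1,m_0\}$, where $m_0>0$ is any level with $\Pr(V>0)>0$ ensuring $\min\{Q_D,m_0\}\not\equiv0$ on a set of positive measure. For instance, take $m_0=1$. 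Since $D\not\equiv0$, $Q_D>0$ on an interval $(u_0,1)$ with $u_0<1$. Hence $\min\{Q_D,1\}$ is strictly positive there. Set
\[
\gamma_0:=\int_0^1\min\{Q_D(u),1\}(-\log u)\,du>0.
\]
Then for every $M\ge M_0=1$ monotonicity gives $\gamma_{D^{(M)}}\ge\gamma_0$.

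Combining, $A_M^{\rm up}+A_M^{\rm low}=42M^2/\gamma_{D^{(M)}}\le 42M^2/\gamma_0=C_DM^2$, which is \eqref{eq:trunc-constant-explicit}.

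The only step needing care is the legitimacy of using $M$ itself, rather than the exact sup-norm, in the constants. I would check that in Lemmas~\ref{lem:bounded-analytic} and \ref{lem:dynamic-tangent} $M$ enters only through inequalities of the form $Q\le M$, $\phi\le M\alpha$ and $\phi'\le M$. If so, any upper bound on $\|Q\|_\infty$ works, with $K=M/\gamma\ge1$ still valid. I expect this to hold on inspection.
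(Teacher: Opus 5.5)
Your proposal is correct and matches the paper's proof. Both use the monotonicity of $\gamma_{D^{(M)}}=\int_0^1 (Q_D\wedge M)(-\log u)\,du$ in $M$ to get $\gamma_{D^{(M)}}\ge\gamma_0:=\gamma_{D^{(M_0)}}>0$; your choice $M_0=1$ is legitimate since $D^{(M)}\not\equiv0$ for every $M>0$. Then replacing $\|Q_{D^{(M)}}\|_\infty$ by $M$ gives $42MK_M\le 42M^2/\gamma_0$. The step you flag as needing care is already settled by your monotonicity remark: enlarging the error constants in Theorems~\ref{thm:upper-log} and \ref{thm:bounded-sqrt} only weakens their conclusions, so you do not need to re-inspect the proofs, which is what the paper does.
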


\begin{proof}
Choose $M_0\ge1$ so that $D^{(M_0)}$ is not identically zero, and put $\gamma_0=\gamma_{D^{(M_0)}}>0$. Since $Q_{D^{(M)}}=Q_D\wedge M$ increases pointwise with $M$, formula \eqref{eq:bounded-phi-prime} implies
\[
\gamma_{D^{(M)}}\ge\gamma_0
\qquad(M\ge M_0).
\]
Also $\gamma_{D^{(M)}}\le M$, so $K_M\ge1$, while $K_M\le M/\gamma_0$. Applying the proofs of Theorems \ref{thm:upper-log} and \ref{thm:bounded-sqrt} with the bound $\|Q_{D^{(M)}}\|_\infty\le M$ gives the displayed choices. Therefore
\[
A_M^{\rm up}+A_M^{\rm low}
=2MK_M+40MK_M
=42MK_M
\le\frac{42}{\gamma_0}M^2.
\]
\end{proof}

\begin{corollary}[Finite-moment truncation rates]\label{cor:finite-moment-rate}
Let $D$ be nonnegative and suppose $\E[V^p]<\infty$ for some $p>1$. Put
\[
\beta_p=\frac{p+3}{2p+2}.
\]
Then, uniformly in $\alpha\in[0,1]$,
\begin{align}
\lowC_T(\alpha,D)
&\ge T\phi(\alpha,D)-O_D\bigl(T^{\beta_p}\bigr),
\label{eq:moment-lower-rate}\\
\upC_T(\alpha,D)
&\le T\phi(\alpha,D)
+O_D\left(
T^{2/(p+1)}H_T^{(p-1)/(p+1)}
\right).
\label{eq:moment-upper-rate}
\end{align}
Both errors are $o(T)$. Since $2/(p+1)<\beta_p$, the logarithmic factor in \eqref{eq:moment-upper-rate} is absorbed by $T^{\beta_p-2/(p+1)}$. In particular, the weaker symmetric display
\[
T\phi(\alpha,D)-O_D(T^{\beta_p})
\le\lowC_T(\alpha,D)
\le\upC_T(\alpha,D)
\le T\phi(\alpha,D)+O_D(T^{\beta_p})
\]
also holds. Consequently, if player $i$ has a finite $p_i$-th moment for some $p_i>1$, then she can guarantee
\[
\E[U_i^T]
\ge T\phi(\alpha_i,D_i)-O_{D_i}(T^{\beta_{p_i}}),
\qquad
\beta_{p_i}=\frac{p_i+3}{2p_i+2}.
\]
\end{corollary}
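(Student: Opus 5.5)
The plan is to specialize the quantitative truncation bounds of Corollary~\ref{cor:integrable-guarantee} and choose the truncation level $M$ as a function of $T$. Lemma~\ref{lem:trunc-constant-growth} controls the constants: for $M\ge M_0$,
\[
A_M^{\rm low}\le C_DM^2,\qquad A_M^{\rm up}\le C_DM^2 .
\]
The moment assumption controls the tail via Markov-type bounds:
\[
\tau(M)=\E[(V-M)_+]\le \E[V^p\one_{\{V>M\}}]M^{1-p}\le \E[V^p]\,M^{1-p}.
\]
The bounds hold uniformly in $\alpha$, since both ingredients are uniform in $\alpha$.

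For the lower bound \eqref{eq:moment-lower-rate}, insert these estimates into \eqref{eq:quant-trunc-lower}. This gives a loss of at most
\[
\E[V^p]\,TM^{1-p}+C_DM^2\sqrt T .
\]
Balancing $TM^{1-p}$ against $M^2T^{1/2}$ gives $M^{p+1}=T^{1/2}$, that is, $M=T^{1/(2p+2)}$. Both terms then equal
\[
T^{1-(p-1)/(2p+2)}=T^{(p+3)/(2p+2)}=T^{\beta_p}.
\]
For $T$ so small that $T^{1/(2p+2)}<M_0$, use $M=M_0$. This contributes only a bounded error, which is absorbed into the $O_D$ constant. Alternatively, the trivial bound $\lowC_T\ge0\ge T\phi-TM_\ast$ covers finitely many $T$.

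For the upper bound \eqref{eq:moment-upper-rate}, insert the same estimates into \eqref{eq:quant-trunc-upper}. The loss is now
\[
TM^{1-p}+M^2H_T .
\]
Balancing gives $M^{p+1}=T/H_T$. Both terms then become
\[
M^2H_T=(T/H_T)^{2/(p+1)}H_T=T^{2/(p+1)}H_T^{(p-1)/(p+1)} .
\]
Small $T$ are handled as before.

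It remains to check the comparison of exponents. We have $2/(p+1)<(p+3)/(2p+2)$ exactly when $4<p+3$, which holds because $p>1$. Hence the power of $T$ dominates the logarithmic factor, and the symmetric display follows. Both errors are $o(T)$ because $\beta_p<1$ for $p>1$.

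The player-$i$ guarantee follows by applying the original-game statement of Corollary~\ref{cor:integrable-guarantee} with the $T$-dependent level $M=\max\{M_0,T^{1/(2p_i+2)}\}$. This is legitimate because the strategy may depend on the horizon $T$, so the truncation level can be fixed in advance for each $T$.

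The only real obstacle is bookkeeping. First, Lemma~\ref{lem:trunc-constant-growth} must be applicable, which requires $M\ge M_0$; this is handled by the small-$T$ case. Second, the $O_D$ constants must not depend on $\alpha$, and they do not, since none of the estimates above involve $\alpha$.
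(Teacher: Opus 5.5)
Your proposal is correct and follows the paper's proof essentially step for step. You use the same tail bound $\tau(M)\le\E[V^p]M^{1-p}$, the same combination of Corollary~\ref{cor:integrable-guarantee} with Lemma~\ref{lem:trunc-constant-growth}, the same truncation levels $T^{1/(2p+2)}$ (lower) and $(T/H_T)^{1/(p+1)}$ (upper) enlarged to at least $M_0$, and the same transfer of the truncated-value policy to player $i$. The only omission is the trivial case $D\equiv0$, which Lemma~\ref{lem:trunc-constant-growth} excludes and which the paper dispatches separately.
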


\begin{proof}
The case $D\equiv0$ is trivial. For $M\ge1$,
\[
\tau(M)=\E[(V-M)_+]\le \E[V^p]M^{1-p}.
\]
Use Corollary \ref{cor:integrable-guarantee} and Lemma \ref{lem:trunc-constant-growth}. For the lower bound, take
\[
M_{\rm low}=T^{1/(2p+2)},
\]
rounded upward and enlarged to be at least $M_0$. Then
\[
T\tau(M_{\rm low})+C_D M_{\rm low}^2\sqrt T
=O_D\bigl(T^{(p+3)/(2p+2)}\bigr),
\]
which proves \eqref{eq:moment-lower-rate}.

The upper estimate may use a different truncation level. Take
\[
M_{\rm up}=\left(\frac{T}{H_T}\right)^{1/(p+1)},
\]
again rounded upward and enlarged when necessary. Both terms in the upper truncation error then have the same order:
\begin{align*}
T\tau(M_{\rm up})
&=O_D\left(T^{2/(p+1)}H_T^{(p-1)/(p+1)}\right),\\
C_D M_{\rm up}^2H_T
&=O_D\left(T^{2/(p+1)}H_T^{(p-1)/(p+1)}\right).
\end{align*}
This proves \eqref{eq:moment-upper-rate}. The original-game guarantee follows by transferring the explicit lower truncated-value policy through Lemma \ref{lem:aggregation}.
\end{proof}

\section{Further questions and extensions}\label{sec:remaining}

The results leave several natural questions open.

\begin{enumerate}[label=\textbf{Q\arabic*.}, leftmargin=2.5em]
\item \textbf{Exact finite-horizon minimax value.} A dynamic minimax theorem incorporating simultaneous bids, public-state feedback, and the tie rule against \Bidder{} would identify a single exact value at every finite horizon.

\item \textbf{Logarithmic lower bounds beyond the tame class.} For arbitrary bounded distributions the upper estimate is logarithmic, but the lower estimate proved here is only $O(\sqrt T)$. The obstruction is the lower Taylor remainder near small $\alpha$, where only $|\phi''(\alpha)|\lesssim 1/\alpha$ is available in general. Improving this would require either a sharper one-sided expansion or a barrier adapted to the local curvature of $\phi$.

\item \textbf{Sharper rates for unbounded distributions.} Corollary \ref{cor:finite-moment-rate} gives separate optimized lower and upper rates under any finite $p>1$ moment. Logarithmic or square-root rates for genuinely unbounded laws would require stronger tail assumptions or a sharper-than-quadratic analysis of the truncation constants.

\item \textbf{Surplus above the floors and utilitarian welfare.} The auction allocates profile-specific slack through bids. A direct equilibrium analysis may yield total-welfare bounds beyond those obtained by summing the individual guarantees.

\item \textbf{Bounded or lossy transfers.} The virtual-budget auction studied here is a pure NTU mechanism. A genuine intermediate model arises when real transfers are allowed only up to a cap, or when a payment costs the payer more than the utility it creates for its recipients. It is natural to ask whether limited real transfers combined with virtual-budget bidding can implement the same guaranteed-utility target with smaller losses, greater welfare, or a simpler rule than the general construction.
\end{enumerate}

\section{Conclusion}

The repeated virtual first-price auction has asymptotic Pareto price of anarchy at most $1.283$. Every sequence of equilibria inherits this asymptotic bound because each player can deviate to a strategy that secures her fair-floor utility against arbitrary joint behavior of the others. The feasible-region estimate behind the constant and the cross-instance Pareto maximality of the fair-floor target are established in \citet{Csoka2026}; the contribution here is to implement that target through a fixed first-price credit rule.

The symmetric heuristic explains the mechanism. Monotone rank bidding points toward efficient allocation, support indifference points toward tightness of the guarantee, and exact first-price aggregation turns all opponents into a single auxiliary \Adversary{}. The static tangent saddle and finite-horizon barriers make this local picture rigorous. Under the direct second-price rule with more than two players, opponents can make one another spend more than the focal player's bid, so the exact aggregation disappears.

The proved lower loss is $O_D(\log T)$ for tame bounded laws, $O_D(\sqrt T)$ for all bounded laws, and $o_D(T)$ for every integrable law. Thus the price-of-anarchy statement rests on a stronger playerwise conclusion: the auction supplies robust security levels, not merely good welfare at a selected equilibrium.

\bibliographystyle{plainnat}
\bibliography{repeated_first_price}

\end{document}